\newcommand{\EP}[4]{
	\begin{center}
		\smallskip
		{\small 
			\begin{tabularx}{\columnwidth}{@{}l@{\hspace*{2mm}}l@{}}
				\toprule
				\multicolumn{2}{c}{\sc{#2}}\label{#1} \\
				\midrule
				{\bf Given:}& \parbox[t]{0.89\columnwidth}{#3\vspace*{1mm}} \\
				{\bf Question:}& \parbox[t]{0.89\columnwidth}{#4\vspace*{.5mm}} \\ 
				\bottomrule
			\end{tabularx}
		}
		\smallskip
	\end{center}
}
\newcommand{\N}{\mathbb{N}}
\newcommand{\littlep}{{\rm p}}
\newcommand{\manyone}{\ensuremath{\mbox{$\,\leq_{\rm m}^{{\littlep}}$\,}}}
\newcommand{\textmanyone}{\ensuremath{\mbox{$\leq_{\rm m}^{{\littlep}}$}}}
\newcommand{\p}{\ensuremath{\mathrm{P}}}
\newcommand{\np}{\ensuremath{\mathrm{NP}}}
\newcommand{\cmark}{\ding{51}}%
\newcommand{\xmark}{\ding{55}}%
\newcommand{\OMIT}[1]{}
\DeclarePairedDelimiter\floor{\lfloor}{\rfloor}
\DeclarePairedDelimiter\abs{\lvert}{\rvert}
\DeclareMathOperator*{\argmax}{arg\,max}
\DeclareMathOperator*{\argmin}{arg\,min}
\newcommand\qedblob{\mbox{\ding{113}}}
\def\literalqed{{\ \nolinebreak\hfill\mbox{\qedblob\quad}}}
\newcommand{\eins}{\mathrm{1\hspace*{-0.9mm}I}}
\newcommand{\maj}{\mathrm{maj}}
\newcommand{\sumposi}[1]{\mathrm{sum\_pos}^{#1}}
\newcommand{\scorei}[1]{\mathrm{score}^{#1}}
\newcommand{\BucklinScorei}[1]{\mathrm{score}_{\mathrm{B}}^{#1}}
\newcolumntype{L}{>{\raggedright\arraybackslash}X}
\newcolumntype{R}{>{\raggedleft\arraybackslash}X}
\newcolumntype{C}{>{\centering\arraybackslash}X}
\title{Skating System Unveiled: Exploring Preference Aggregation in Ballroom Tournaments}
\author{Laryssa Horn\and
	Paul N\"{u}sken\and
	J\"{o}rg Rothe\and
	Tessa Seeger
	\institute{Institut f\"ur Informatik, MNF\\
Heinrich-Heine-Universit\"at D\"usseldorf\\
D\"usseldorf, Germany}
	\email{
	\{laryssa.horn, paul.nuesken, rothe, tessa.seeger\}@hhu.de
	}
}
\theoremstyle{definition}
\newtheorem{definition}{Definition}
\theoremstyle{definition}
\theoremstyle{definition}
\theoremstyle{theorem}
\newtheorem{theorem}{Theorem}
\theoremstyle{example}
\newtheorem{example}{Example}
\theoremstyle{theorem}
\newtheorem{lemma}{Lemma}
\theoremstyle{theorem}
\theoremstyle{theorem}
\theoremstyle{theorem}
\begin{document}
\maketitle

\begin{abstract}
The \textit{Skating System}, which originated from the scrutineering system in dance sport tournaments, can be formulated as a voting system: We introduce and formalize the \emph{Skating System Single} (\emph{SkS}, for short), a new voting system embedded into the framework of computational social choice.
Although SkS has similarities with Bucklin voting, it differs from it because it is subject to additional constraints when determining the election winners.
Through an analysis of the axiomatic properties of SkS and of its
vulnerability to manipulative and electoral control attacks, we compare SkS with Bucklin voting and provide insights into its potential strengths and weaknesses.
In particular, we show that SkS satisfies nondictatorship as well as the majority criterion, positive responsiveness, monotonicity, and citizens' sovereignty but violates the Condorcet criterion, strong monotonicity, independence of clones, consistency, participation, resoluteness, and strategy-proofness.
Further, we study manipulation, i.e., where (groups of) voters vote strategically to improve the outcome of an election in their favor, showing that the constructive coalitional weighted manipulation problem for SkS is $\np$-complete, while the destructive variant 
can be solved in polynomial time.
Lastly, we initiate the study of electoral control, where an external agent attempts to change the election outcome by interfering with the structure of the election.
Here, we show \np-completeness for constructive and destructive control by deleting candidates as well as for constructive control by adding voters, whereas we show
that the problem of destructive control by adding voters can be solved in polynomial time.
\end{abstract}

\section{Introduction}
\label{sec:introduction}

Voting plays a fundamental role in collective decision-making and has an abundance of practical usages.
One of these applications is winner determination in \emph{ballroom tournaments}.
Ballroom dancing is a style of partner dance that originated in the 16th century in the European Royal courts~\cite{arbeau1888orchesographie,marion2008ballroom} and has since evolved, incorporating influences from various cultures.
It is characterized by specific partner dance positions, coordinated movements, and a combination of various dance styles.
Today, it continues to be popular worldwide, both as a social activity and a competitive sport with standardized rules and judging criteria.
A typical ballroom tournament proceeds in several rounds in which couples dance simultaneously and where the so-called adjudicators cast their votes on the participating couples.
After each round, at most half of the participating dancing couples
drop out, whereas the remaining couples proceed to the next round.
Eventually, the tournament progresses through the quarter-finals and semi-finals to the final.
A final consists of at least two and at most six couples competing in multiple dances.\footnote{In exceptional cases, the tournament office may allow more than six couples.}
The adjudicators evaluate the dance couples in two different ways depending on the phase of the tournament.
During all rounds except the final, the adjudicators approve or disapprove the participants.
However, they can give an approval to at most half of the participants.
The evaluation format changes in the finals, though, where the adjudicators are asked to cast a strict linear ranking of all participants per dance.
This ranking must be created independently for every dance, i.e., having, e.g., four dances in the final, all adjudicators must rank all couples four times.

The intricate and complex rule set the adjudicators must adhere to is known as the \emph{Skating System}, described in detail by Williams~\cite{williams2018skating}.
The system's name has its roots in figure skating and was adapted and later used for dance sport tournaments.
Records show that the first major competition that used an early form of the Skating System was the British Championship in Blackpool in 1937 (see \cite{williams2018skating}).
The rule set covers the following topics: Instructions to the adjudicators on how to evaluate the participants, winner determination of the single dances, winner determination of the tournament, and additional rules
to break possible ties.
That is, the Skating System consists of multiple evaluation methods, each producing an output to be used for the next evaluation step.
The following example shows
how to rank the couples---based on the rankings of the adjudicators---to determine the winner of one dance in a dance tournament final according to the Skating System.

\medskip
\noindent
\begin{minipage}[h!]{0.65\columnwidth}
	\begin{example}\label{example:dance-tournament}
		Consider a dance tournament final with a single dance, six competing couples, and five adjudicators---$A$, $B$, $C$, $D$, and $E$---each ranking the couples from first (best) to sixth (last) place as shown on the right.
		Intuitively, couple $31$ could be winning for having the most first positions, or couples $32$ or $34$ for having several of first, second, and third positions.
		However, the actual winner according to the Skating System is couple~$33$.
		Couple $32$ is ranked second, $34$ is ranked third, followed by couples~$36$, then~$35$, and lastly, couple $31$ is even in the last position.
	\end{example}
\end{minipage}
\hfill
\begin{minipage}[t]{0.32\columnwidth}
	\begin{tabular}{ c @{\quad} c c c c c }
	\toprule
	Couples & $A$ & $B$ & $C$ & $D$ & $E$ \\
	\midrule
	$31$ & $1$ & $1$ & $6$ & $6$ & $6$ \\
	$32$ & $4$ & $2$ & $3$ & $3$ & $1$ \\
	$33$ & $2$ & $6$ & $2$ & $4$ & $2$ \\
	$34$ & $3$ & $3$ & $1$ & $2$ & $5$ \\
	$35$ & $6$ & $4$ & $5$ & $6$ & $4$ \\
	$36$ & $5$ & $5$ & $4$ & $1$ & $3$ \\
	\bottomrule
	\end{tabular}
\end{minipage}
\medskip

As shown in the example, the Skating System allows for quite unintuitive results, making a formalization and an extensive axiomatic analysis particularly important to either justify or dismiss its usage both in ballroom tournaments and as a general voting rule.
Until today, there are only a handful of investigations of the Skating System~\cite{mor:t:the-skating-system,mor:t:improving-the-skating-system-part-two}. Therefore, the Skating System is mostly unknown in the area of COMSOC.

We introduce a new voting rule---called \emph{Skating System Single} (\emph{SkS}, for short)---that corresponds to and formalizes parts of the rule set of the Skating System, namely the winner determination of each dance in the final round.
To the best of our knowledge, the Skating System has not been formalized as a social choice function yet.
However, such a formalization is necessary to be able to perform axiomatic or complexity-theoretic analyses.
Since the rule set of the Skating System is already widely used in real-life tournaments worldwide,\footnote{Specifically, the Skating System is used by the World DanceSport Federation and the International Olympic Committee.}
we deem it
necessary to start such studies.
Therefore, we provide an extensive axiomatic analysis of SkS,
considering properties such as the \emph{Condorcet criterion}, \emph{majority}, \emph{positive responsiveness}, \emph{(strong) monotonicity}, \emph{independence of clones},
and \emph{participation}.

Moreover, we study the complexity of \emph{manipulation} for SkS:
In an election, instead of all voters casting their votes sincerely when ranking the candidates from most to least preferred, some voters might also vote \emph{strategically} to change the election outcome in their favor.
That is, based on complete knowledge of all sincere voters' preferences, they might cast ballots that misrepresent their honest preferences with the intention of making a favored candidate the winner of an election (\emph{constructive case}) or prevent a despised candidate from becoming the winner (\emph{destructive case}).
A voting system is called \emph{strategy-proof} if no voter can improve the election outcome by voting strategically.
However, Gibbard~\cite{gib:j:manipulation-voting-schemes} and Satterthwaite~\cite{sat:j:strategy-proofness} showed that every voting system that satisfies certain reasonable axioms like \emph{nondictatorship} cannot be strategy-proof.
Bartholdi et al.~\cite{bar-orl:j:polsci:strategic-voting,bar-tov-tri:j:manipulating} introduced and studied constructive manipulation by a single manipulator who aims at making a favored candidate win an unweighted election.
Conitzer et al.~\cite{con-san-lan:j:when-hard-to-manipulate} added the destructive goal, generalized this for coalitions of manipulators and also considered weighted elections, i.e., elections where the contribution of each vote is weighted by some integer.
In their book chapter, Conitzer and Walsh~\cite{con-wal:b:handbook-comsoc-manipulation} survey the computational barriers to manipulation in voting.

Another type of attack on elections, for which we initiate the study of SkS in this work, is \emph{electoral control}, introduced by Bartholdi et al.~\cite{bar-tov-tri:j:control} in the constructive and Hemaspaandra et al.~\cite{hem-hem-rot:j:destructive-control} in the destructive case.
In control scenarios, an external agent, also called the \emph{(election) chair}, attempts to make a distinguished candidate win (constructive case) or not win (destructive case) an election by interfering with the structure of the election, e.g., by adding, deleting, or partitioning voters and candidates.

If the corresponding decision problem is \np-hard, we say that the voting rule is \emph{resistant} to that kind of attack. 
Otherwise, if the decision problem can be solved in polynomial time, we call this voting rule \emph{vulnerable} to that attack type.

\paragraph{Related Work.} Manipulation and control are also closely related to 
bribery~\cite{fal-hem-hem:j:bribery}, a third attack type on elections, surveyed together with control by Faliszewski and Rothe~\cite{fal-rot:b:handbook-comsoc-control-and-bribery}.
In bribery scenarios, an external agent tries to make a given candidate win an election (in the constructive case) or, not win (in the destructive case) by bribing voters to change their vote within a given budget.
The book chapter by Baumeister and Rothe~\cite{bau-rot:b-2nd-edition:economics-and-computation-preference-aggregation-by-voting} covers all three types of attacks on elections.

The newly introduced SkS is most closely related to the \emph{Bucklin} voting system.
For Bucklin voting, manipulation and bribery have been studied by Faliszewski et al.~\cite{fal-rei-rot-sch:j:manipulation-bribery-campaign-management-in-bucklin-fallback-voting} and electoral control by Erd\'{e}lyi et al.~\cite{erd-fel-rot-sch:j:control-in-bucklin-and-fallback-voting-theoretical,erd-fel-rot-sch:j:control-in-bucklin-and-fallback-voting-experimental}.
%
%
While the work of Bartholdi et al.~\cite{bar-orl:j:polsci:strategic-voting,bar-tov-tri:j:manipulating} marks the very beginnings of the area of COMSOC, this---and, in particular, the study of strategic voting, manipulation, and control of elections---is still a very active field of current research (see, e.g., \cite{egg-now:j:susceptibility-strategic-voting-pluratlity-and-instant-runoff,elk-obr-teh:c:temporal-fairness-multiwinner-voting,mau-nic-nue-rot-see:c:toward-completing-the-picture-of-control-in-schulze-and-ranked-pairs-elections}).
%

\section{Preliminaries}
\label{sec:preliminaries}

Before we turn to elections, 
we introduce some general assumptions and notation. 
For a logical expression~$P$, we define $\eins_P$ to be $1$ if $P$ is true, and $0$ otherwise. 
For $n \in \mathbb{N}\setminus \{0\}$, we write $[n]$ for the
set $\{1, \ldots, n\}$.
We assume that the reader is familiar with the basic concepts of complexity theory (for more background, the textbooks by see Papadimitriou~\cite{pap:b:complexity} and Rothe~\cite{rot:b:cryptocomplexity}).

An election is a pair $(C,V)$, where $C=\{c_1, \dots, c_m\}$ for $m\in \N$ is a set of candidates
and $V=(v_1, \dots, v_n)$ for $n\in \N$ is a list of votes specifying the voters' preferences over the candidates in~$C$.
Voters can express their preferences over the candidates in many different ways.
As is most common in social choice theory, SkS requires rankings of the candidates, i.e., each vote is given by a (strict) linear order over~$C$.
We write $c \succ_v d$ if a voter $v$ prefers candidate $c$ to candidate~$d$, omitting the subscript when $v$ is clear from the context: $c \succ d$.
We use $\mathrm{pos}_v(c)$ to denote the position of candidate $c$ in vote~$v$.
A voting system maps any given election $(C,V)$ to a subset of~$C$, the \emph{winner(s) of $(C,V)$}.

For $i \in [m]$ and $c\in C$, let
\[
\scorei{i}(c) = \sum_{v\in V}\eins_{\mathrm{pos}_v(c) \leq i}
\]
be the number of votes in which $c$ is ranked among the top-$i$ positions in the votes from~$V$.
Further, let
\[
\sumposi{i}(c) = \sum_{v\in V}\mathrm{pos}_v(c) \cdot \eins_{\mathrm{pos}_v(c) \leq i}
\]
be the sum of positions candidate $c$ achieves up to stage~$i$ in the votes~$V$.
Only if needed, we use $(C,V)$ as a subscript and explicitly write $\scorei{i}_{(C,V)}(c)$ and $\sumposi{i}_{(C,V)}(c)$.

We now define the well-known \emph{Bucklin voting system} which, given an election $(C,V)$, proceeds in stages until at least one candidate has reached (or exceeded) the \emph{majority threshold} $\maj(V) = \floor*{\nicefrac{\abs{V}}{2}} + 1$, i.e., is among at least $\maj(V)$ top-$i$ positions in the votes~$V$.
That is, in each stage~$i$, we check if at least one candidate reaches the majority threshold.
If none of the candidates reached it in the current stage, we proceed to the next stage.
For any candidate $c \in C$ and for any stage $i \in [m]$, with $m = |C|$, the \emph{Bucklin score of $c$ in stage $i$} is defined by $\BucklinScorei{i}(c) = \scorei{i}(c)$.
Let $i^*$ 
be the smallest stage in which at least one candidate reaches $\maj(V)$,
and return the candidate(s) with the highest Bucklin score in this stage~$i^*$ as the (set of) \emph{Bucklin winner(s)}, 

Obviously, every Bucklin winner who is already found in the first stage 
must be a unique winner (i.e., wins alone), and if we were to proceed to the very last stage $m = \abs{C}$, all candidates in $C$ are Bucklin winners.
In particular, we always have at least one Bucklin winner.

\medskip
\noindent
\begin{minipage}[h!]{0.7\columnwidth}
	\begin{example}\label{ex:bucklin}
			Let $(C,V)$ be an election with $C = \{U,X,Y,Z\}$ and the following votes shown on the right.
			In stage $i=1$, we have $\BucklinScorei{1}(U) =2$ and $\BucklinScorei{1}(c) =1$ for $c \in \{X,Y,Z\}$.
			None of the candidates reaches the majority threshold $\maj(V) = 3$; thus we proceed with stage~2 in which both $U$ and $X$ reach this threshold with a score of~$3$.
			Hence, both $U$ and $X$ are the Bucklin winners of the election.
	\end{example}
\end{minipage}
\hfill
\begin{minipage}[t]{0.28\columnwidth}
	\begin{tabularx}{\linewidth}{r C}
		\toprule
		Anne: & $U \succ X \succ Y \succ Z$ \\
		Bob:  & $U \succ X \succ Y \succ Z$ \\
		Caro: & $X \succ U \succ Y \succ Z$ \\
		Dave: & $Y \succ Z \succ U \succ X$ \\
		Ella: & $Z \succ Y \succ U \succ X$ \\
		\bottomrule
	\end{tabularx}
\end{minipage}

\section{Skating System Single}
\label{sec:SkS}

Example~\ref{ex:bucklin} shows a Bucklin election with multiple winners.
A keen reader may have noticed that, even though both winners share the same Bucklin score, one of the candidates (namely $U$) is ranked higher than the other in two of the three decisive votes (and four of the five voters in total).
While in this example the difference in the sum of positions is only one, we later show in Theorem~\ref{thm:properties-sks-bucklin} that this difference can be arbitrarily large.
Surely, this makes a strong case that $U$ deserves to win alone, does it not?

In this section, we---first intuitively and then more formally---introduce the \emph{Skating System Single} (\emph{SkS}, for short), a new voting system inspired by the rule set of the Skating System~\cite{williams2018skating}.
Just as Bucklin voting, SkS proceeds in stages.
While proceeding through each stage, various conditions are checked until a winner is found.
Essentially, most of the conditions can be seen as a form of tie-breaking if more than one candidate qualifies as a possible winner.
However, even though this intrinsic tie-breaking exists, SkS is not a resolute rule.
Before formally defining winner determination, let us first give an overview of the stages in SkS, given an election $(C,V)$ with $m = \abs{C}$ candidates.

\begin{description}
\item[Stage 1:]
  Is any candidate $c \in C$ in the top position of at least $\maj(V)$ votes?
\begin{itemize}
	\item [\emph{Yes:}] Candidate $c$ is the unique SkS winner.
	\item [\emph{No:}] Continue with stage~2.
\end{itemize}

\item[Stage~$\pmb{i,\ 2 \leq i \leq m}$:]
  Are there any candidates $C' \subseteq C$ in the top-$i$ positions of at least $\maj(V)$ votes?
\begin{itemize}
\item [\emph{Yes:}] Let $C^* \subseteq C'$ be the candidates from $C'$ with the highest score (to be formally defined below).
  Is it true that $\abs{C^*}$ = 1?
	\begin{itemize}
		\item [\emph{Yes:}] This candidate in $C^*$ is the unique SkS winner.
		\item [\emph{No:}] Does a single candidate $c \in C^*$ have the lowest sum of top-$i$ positions (to be formally defined below)?
		\begin{itemize}
			\item [\emph{Yes:}] This candidate $c$ is the unique SkS winner.
			\item [\emph{No:}] If $i=m$ (i.e., currently we are already in the last stage), return all candidates $C^*$ as SkS winners.

        Otherwise, continue with stage~$i+1$, but only with candidates from $C^*$ who tie for the lowest sum of
        top-$i$ positions.\footnote{Proceeding with only the candidates from $C^*$ who tie for the lowest sum of top-$i$ positions instead of all candidates from $C^*$ leads to an interesting interaction.
        A candidate $c \in C^*$ may be excluded based on the sum of positions, even if $c$ would beat all candidates who tie for the lowest sum of positions by score in the next stage.
        }
		\end{itemize}
	\end{itemize}
	\item [\emph{No:}] Continue with stage~$i+1$ (with all candidates).
\end{itemize}
\end{description}

Having a first idea of what the complete SkS voting system looks like, we can now start with its formally precise definition.

Again, we use the majority threshold $\maj(V) = \floor*{\nicefrac{\abs{V}}{2}} + 1$.
Next, we define the scoring functions used by SkS.
In each
stage, the score of a candidate will consist of two components: First, the number of votes in which the candidate was ranked in the top-$i$ positions and, second, the sum of these positions, allowing for a more refined distinction of candidates.
The higher a candidate is ranked in a voter's preference, the less impact this vote has on the sum of positions, i.e., a lower ranking in the top-$i$ positions will increase the sum of positions more than being the top candidate in a vote, so a lower sum of positions is favorable.

\begin{definition}[\textbf{SkS}]\label{def:SkS} 
	Let $(C,V)$ be an election with $m = \abs{C}$ candidates.
	%
	For any candidate~$c$, let
        \[
        i^*_c = \min_{i \in [m]}\{i \mid \scorei{i}(c) \geq \maj(V)\}
        \]
        be the minimal stage in which $c$ reaches the majority threshold, and let $i^* = \min_{c \in C}(i^*_c)$ be the smallest stage in which at least one candidate reaches the majority threshold.
	The SkS voting rule looks for the minimal stage $j \in[i^*, m]$ in which one single candidate has the highest score.
	To this end, consider the following cases for the set
		$C_j = \argmax_{c \in C} (\scorei{j}(c))$.
	\begin{description}
	\item[Case $\pmb{\abs{C_j} = 1}$:]
		  The sole candidate in $C_j$ is the unique SkS winner.
		\smallskip

	\item[Case $\pmb{\abs{C_j} > 1}$:]
		  SkS compares the sum of positions for all candidates in $C_j$:
		  $C_j' = \argmin_{c' \in C_j} (\sumposi{j}(c'))$.
		\begin{description}
		\item[Case $\pmb{\abs{C_j'} = 1}$:]
				  The sole candidate in $C_j'$ is the unique SkS winner.
		\smallskip

		\item[Case $\pmb{\abs{C_j'} > 1}$:]
				  If more than one candidate has the lowest sum of positions, we proceed to the next stage $j+1$ and repeat the steps mentioned above with only the candidates in $C_j'$ (i.e., set $C = C_j'$).
				  As soon as $j = m$, all candidates $C_j'$ are the SkS winners.
		\end{description}
	\end{description}
\end{definition}

The following example illustrates how SkS works.

\begin{example}
\label{ex:sks}
	Let $(C,V)$ be an election with the set of candidates $C = \{X,Y,Z\}$ and the voters Anne, Bob,
		Caro, and Dave having the following preferences: Anne: $X \succ Y \succ Z$, Bob: $Y \succ X \succ Z$, Caro: $Y \succ Z \succ X$, and Dave: $Z \succ X \succ Y$.
		We have $\maj(V)= \floor*{\nicefrac{4}{2}} + 1 = 3$.
		The stages of the SkS election are depicted in Table~\ref{tab:example-sks}.
		Candidates within the top-$i$ positions of the votes in stage~$i$ are marked by a~$\checkmark$.
		The two rows at the bottom of the tables show the resulting scores.

	\begin{table}[t!]
		\begin{subtable}[c]{0.48\textwidth}
	\begin{tabularx}{\linewidth}{rl|CCC}
		\toprule
		\textbf{Stage 1} & Votes & $X$ & $Y$ & $Z$\\
		\midrule
		Anne: &$\color{red} X \color{black} \succ Y \succ Z$ & $\checkmark$ & - & -\\
		Bob:  &$\color{red} Y \color{black} \succ X \succ Z$ & - & $\checkmark$ & -\\
		Caro: &$\color{red} Y \color{black} \succ Z \succ X$ & - & $\checkmark$ & -\\
		Dave: &$\color{red} Z \color{black} \succ X \succ Y$ & - & - & $\checkmark$\\
		\midrule
		& \multicolumn{1}{r|}{$\scorei{1}(c)$} 				 & 1 & 2 & 1 \\
		& \multicolumn{1}{r|}{$\sumposi{1}(c)$} 			 & 1 & 2 & 1\\
		\bottomrule
	\end{tabularx}
	\subcaption{Stage~1 of the election.}\label{tab-a:example-sks-stage-one}
	\end{subtable}
	\hfill
	\begin{subtable}[c]{0.48\textwidth}
	\begin{tabularx}{\linewidth}{rl|CCC}
		\toprule
		\textbf{Stage 2} & Votes & $X$ & $Y$ & $Z$\\
		\midrule
		Anne: &$\color{red} X \succ Y \color{black} \succ Z$ & $\checkmark(1)$ & $\checkmark(2)$ & -\\
		Bob:  &$\color{red} Y \succ X \color{black} \succ Z$ & $\checkmark(2)$ & $\checkmark(1)$ & -\\
		Caro: &$\color{red} Y \succ Z \color{black} \succ X$ & - & $\checkmark(1)$ & $\checkmark(2)$\\
		Dave: &$\color{red} Z \succ X \color{black} \succ Y$ & $\checkmark(2)$ & - & $\checkmark(1)$\\
		\midrule
		& \multicolumn{1}{r|}{$\scorei{2}(c)$} 				 & 3 & 3 & 2 \\
		& \multicolumn{1}{r|}{$\sumposi{2}(c)$} 			 & 5 & 4 & 3\\
		\bottomrule
	\end{tabularx}
	\subcaption{Stage~2 of the election.}\label{tab-b:example-sks-stage-two}
	\end{subtable}
	\caption{Stages of the SkS election in Example~\ref{ex:sks}.}\label{tab:example-sks}
\end{table}

	Since in the first stage, no candidate receives enough votes to reach the majority threshold, the election proceeds to the second stage. Here, 
	for each candidate, we also show in which position of the given vote the candidate is ranked in case this candidate was ranked within the top-$i$ positions. That is, we write $\checkmark(2)$, if $X$ is ranked in the second position.
	In the second stage, both $X$ and $Y$ reach the majority threshold.
	Since candidate $Y$ has a lower sum of positions, $Y$ is the unique SkS winner.
\end{example}


As mentioned above, SkS resembles Bucklin voting: They both proceed in stages and use the majority principle.
Therefore, let us compare them in some more detail and discuss their similarities and differences.
The first and most obvious similarity is that the first part of the SkS score equals the Bucklin score:
$\scorei{i}(c) = \BucklinScorei{i}(c) =  \sum_{v\in V}\eins_{\mathrm{pos}_v(c) \leq i}$.
Second, the first stage of both voting systems coincides: A candidate $c$ with $\scorei{1}(c) \ge \maj(V)$ wins alone in both SkS and Bucklin voting.
Beyond the obvious, unique Bucklin winners reaching the majority threshold at a later stage will also be unique SkS winners.
The reverse, however, is not always the case:
Even if an SkS winner is unique, there can be multiple Bucklin winners of the election.
For example, in the election from Example~\ref{ex:sks}, we have seen that $Y$ is the unique SkS winner, but it is easy to see that both $X$ and $Y$ would be the Bucklin winners.
Furthermore, all SkS winners are also Bucklin winners; again, the reverse is not always the case
(as we have just seen).
We summarize and prove these and other properties below.


\begin{theorem}
  \label{thm:properties-sks-bucklin}
  \begin{enumerate}[label=(\roman*)]
    \item Every unique Bucklin winner is a unique SkS winner.
    \item Every SkS winner is a Bucklin winner.
    \item For all $k \ge 1$, there is an election with the same $k$ SkS and Bucklin winners.
    \item The difference in the sum of positions of Bucklin winners can be arbitrarily large in the decisive stage.
  \end{enumerate}
\end{theorem}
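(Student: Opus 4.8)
The plan is to treat the four parts separately, deriving (i) and (ii) from how the decisive stage and the surviving candidate set evolve in Definition~\ref{def:SkS}, and (iii), (iv) from explicit families of elections. The starting observation for (i) is that $\scorei{i} = \BucklinScorei{i}$, so the quantity $i^*$ of Definition~\ref{def:SkS} coincides with the Bucklin decisive stage. If $c$ is the unique Bucklin winner, then $c$ is the unique maximizer of $\scorei{i^*}(\cdot)$ over $C$, hence $C_{i^*} = \argmax_{c' \in C}\scorei{i^*}(c') = \{c\}$, so SkS already stops in stage $i^*$ and outputs $c$ alone.

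For (ii), I would follow the ``surviving set'' $D_j$ that SkS carries into stage $j$, with $D_{i^*} = C$. Whenever stage $j$ is not decisive, SkS continues with $D_{j+1} = C_j' \subseteq C_j \subseteq D_j$, where $C_j = \argmax_{c\in D_j}\scorei{j}(c)$ and $C_j' = \argmin_{c\in C_j}\sumposi{j}(c)$; and whenever it terminates --- whether because $\abs{C_j}=1$, because $\sumposi{j}$ has a unique minimizer on $C_j$, or because $j=m$ --- every SkS winner lies in $C_j \subseteq D_j$. A short induction then gives $D_j \subseteq C_{i^*}$ for all $j \ge i^*$, and $C_{i^*} = \argmax_{c\in C}\scorei{i^*}(c)$ is exactly the set of Bucklin winners, so every SkS winner is one.

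For (iii), by (ii) it suffices to build, for each $k \ge 2$ (the case $k=1$ being immediate from any single-candidate election), an election with exactly $k$ Bucklin winners that all survive SkS to the last stage. I would take $C = \{c_1,\dots,c_k\}$ and $V = (v_1,\dots,v_k)$ with $v_t$ the cyclic order $c_t \succ c_{t+1} \succ \cdots \succ c_{t+k-1}$ (indices cyclically in $[k]$), so that each candidate occupies each position exactly once. Then $\scorei{i}(c_j) = i$ and $\sumposi{i}(c_j) = i(i+1)/2$ do not depend on $j$; the decisive stage is $i^* = \maj(V) \le k$, where all $k$ candidates tie in both score and sum of positions, so by the same symmetry SkS never shrinks the candidate set and returns all of $C$ in stage $m=k$, while all $k$ candidates are the Bucklin winners. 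For (iv), I would take the Bucklin election of Example~\ref{ex:bucklin}, where $U$ and $X$ are the two Bucklin winners in the decisive stage $i^*=2$ with $\sumposi{2}(U)=4$ and $\sumposi{2}(X)=5$, and replace each of its five votes by $t$ identical copies. All of $\scorei{i}$ and $\sumposi{i}$ scale by $t$, the decisive stage stays $2$ (no one reaches the new threshold $\maj(V)$ before stage $2$, and at stage $2$ only $U$ and $X$ attain the maximal score $3t$), so $U$ and $X$ remain the sole Bucklin winners while $\abs{\sumposi{2}(X) - \sumposi{2}(U)} = t$ grows without bound.

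The only step that needs to be written carefully is the bookkeeping in (ii): verifying that the candidate set in Definition~\ref{def:SkS} can only shrink from stage to stage and always stays inside the stage-$i^*$ argmax. Everything else is routine once the two constructions above are fixed, so I do not expect a genuine obstacle.
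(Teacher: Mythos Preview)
Your proof is correct, and for parts (i)--(iii) it is essentially the paper's argument: the same observation that $\scorei{i}=\BucklinScorei{i}$ and the same cyclic-profile construction. One small slip in (ii): you write $D_j\subseteq C_{i^*}$ ``for all $j\ge i^*$'', but $D_{i^*}=C$, so the containment only holds for $j>i^*$; the case $j=i^*$ is handled directly since the winners there lie in $C_{i^*}$ itself. This is trivial to fix and does not affect the argument.

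Part (iv) is where you and the paper genuinely diverge. The paper builds a parameterized family from scratch, with candidates $\{a,b\}\cup C'\cup D$ and carefully arranged blocks of $i-1$ identical votes, so that $a$ and $b$ both become Bucklin winners in stage $n+2$ with $\sumposi{n+2}(a)=i-1+n+2$ and $\sumposi{n+2}(b)=1+(i-1)(n+2)$; the gap can then be driven up by increasing either the number of voters ($i$) or the number of candidates ($n$). Your argument is more economical: you simply take the five-voter election of Example~\ref{ex:bucklin} and replicate every vote $t$ times, observing that $\scorei{i}$, $\sumposi{i}$, and the threshold all scale linearly so the decisive stage and the Bucklin-winner set are preserved while the gap becomes~$t$. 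Your route is shorter and reuses an example already in the paper; the paper's construction, on the other hand, shows that the gap can grow even with a bounded number of voters (by increasing~$n$ alone), a slightly stronger phenomenon that your scaling argument does not capture. Either approach suffices for the theorem as stated.
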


\begin{proof}

 \begin{enumerate}[label=(\roman*)]
  \item Assume that $c_w \in C$ is the unique Bucklin winner of an election $(C,V)$.
  Let $i^*$ be the minimal stage in which at least one candidate reaches the majority threshold in Bucklin voting.
  Clearly, $c_w$ reaches the majority threshold in stage~$i^*$.
  The Bucklin winner is defined as $\argmax_{c^* \in C} (\BucklinScorei{i^*}(c^*))$.
  Since $c_w$ is the sole Bucklin winner, $c_w$ has a greater Bucklin score than all other candidates in stage~$i^*$.
  The majority threshold is the same for both SkS and Bucklin voting and, thus, we know that $c_w$ also reaches the majority threshold in an SkS election in stage~$i^*$ and stage $i^*$ is, again, the smallest stage in which a candidate has reached the majority threshold.
  By definition of $\scorei{i}(c) =  \sum_{v\in V}\eins_{\mathrm{pos}_v(c) \leq i} = \BucklinScorei{i}(c)$, it immediately follows that $\argmax_{c \in C} (\scorei{i^*}(c)) = \{c_w\}$, and therefore, $c_w$ is also the unique SkS winner of $(C,V)$.
  \smallskip

  \item Let $(C,V)$ be an election.
  First, if $(C,V)$ has a unique Bucklin winner, we know from
  the first statement that the SkS and Bucklin winners of $(C,V)$ are the same.
  So assume that $(C,V)$ has two or more Bucklin winners.
  It remains to show that every SkS winner is also a Bucklin winner in this case.
  We show this by contradiction.
  Let $C_w^B$ be the set of Bucklin winners and $C_w^S$ the set of SkS winners of $(C,V)$.
  For the sake of contradiction, assume that $C_w^S \not\subseteq C_w^B$, i.e., at least one SkS winner, say $c^*$, is not a Bucklin winner.
  Since $c^*$ is an SkS winner, we have $\scorei{i^*}(c^*) \ge \maj(V)$, where $i^*$ is the minimal stage in which any candidate reached the majority threshold.
  Since $c^*$ is not a Bucklin winner, we also know that $\BucklinScorei{i^*}(c^*) < \BucklinScorei{i^*}(c)$ and $\BucklinScorei{i^*}(c) \ge \maj(V)$ for all $c$ in $C_w^B$.
  Immediately, we have $\scorei{i^*}(c^*) < \scorei{i^*}(c)$ for all $c$ in $C_w^B$.
  Since $c^* \in C_w^S$, we have $c^* \in \argmax_{c \in C} (\scorei{i^*}(c))$.
  This is a contradiction.
  \smallskip

  \item For $k = 1$, this directly follows from the first statement.
  For $k > 1$, construct an election $(C,V)$ as follows.
  Let $C = \{c_1, c_2, \ldots, c_k\}$.
  Construct $k$ votes by setting
  \[
   \begin{array}{ccccccccc}
    c_1 & \succ & c_2 & \succ & \cdots & \succ & c_{k-1} & \succ & c_k\\
    c_2 & \succ & c_3 & \succ & \cdots & \succ & c_k    & \succ & c_1\\
    &       &     &      &  \vdots &       &       &       &    \\
    c_k & \succ & c_1 & \succ & \cdots & \succ & c_{k-2} & \succ & c_{k-1}
   \end{array}
  \]

  Due to the cyclic permutation of the votes, each candidate is ranked at each position exactly once and receives exactly one additional point in each stage.
  Therefore, all candidates reach the majority threshold $\maj(V)$ at the same time: in stage
  $\frac{k+1}{2}$ for odd $k$, and in stage
  $\frac{k}{2} + 1$ for even~$k$.
  Since all candidates tie in their scores, they all are Bucklin winners.
  For SkS, the positional scores of all candidates in each stage $i$ is $\sum_{\ell \in [i]}{\ell}$ and, as in Bucklin voting, all candidates tie in each stage.
  Therefore, SkS proceeds through all stages without excluding any candidate and, finally, declares all of them winners.
  \smallskip

  \item We give a construction for an election in which the difference in the sum of positions between two Bucklin winners can be controlled by setting two integer parameters.
  Let $i, n \in \mathbb{N}$ be two integers with $i \ge 3$ and $n \ge 0$ and let $C' = \{c_1, \ldots, c_{2n}\}$ and $D = \{d_1, \ldots, d_{n+2}\}$ be two sets of candidates.
  Construct an election $(C,V)$ with the candidates $C = \{a,b\} \cup C' \cup D$ and the following votes---note that we only depict the first $n+2$ positions of each vote, as the remainder of the vote can be arbitrary:
  \smallskip
  \begin{center}
   \begin{tabularx}{0.5\textwidth}{ll}
    \toprule
    \multicolumn{1}{c}{Vote list $V$} & \\
    \midrule
    $a\phantom{_{1}} \succ c_1\phantom{_{+1}} \succ \ldots \succ c_n\phantom{_{+1}}  \succ b\phantom{_{n+2}}  \succ \ldots $ & \rdelim\}{3}{3mm}[$(i-1)$] \tabularnewline
    \multicolumn{1}{c}{\vdots} & \tabularnewline
    $a\phantom{_{1}} \succ c_{1}\phantom{_{+1}} \succ \ldots \succ c_n\phantom{_{+1}}  \succ b\phantom{_{n+2}}  \succ \ldots $ & \tabularnewline
    $b\phantom{_{1}} \succ c_{n+1} \succ \ldots \succ c_{2n}\phantom{_{+}}  \succ a\phantom{_{n+2}}  \succ \ldots $ & \tabularnewline
    $d_1 \succ d_2\phantom{_{+1}} \succ \ldots \succ d_{n+1} \succ d_{n+2} \succ \ldots $ & \rdelim\}{3}{3mm}[$(i-1)$]  \tabularnewline
    \multicolumn{1}{c}{\vdots} & \tabularnewline
    $d_1 \succ d_2\phantom{_{+1}} \succ \ldots \succ d_{n+1} \succ d_{n+2} \succ \ldots $ & \tabularnewline
    \bottomrule
   \end{tabularx}
  \end{center}
  \smallskip
  We have $\maj(V) = \floor*{i-\frac{1}{2}} + 1 = i$.
  It is clear, that each candidate from $C'$ and $D$ occurs at most $i-1$ times in the top $n+2$ positions.
  Candidate $a$ is ranked in the first postion a total of $i-1$ times and then once again in position $n+2$,
  reaching the majority threshold in stage $n+2$.
  The inverse is true for candidate $b$, who is ranked only once in the first position and then $i-1$ times in position $n+2$, also
   reaching the majority threshold in stage $n+2$.
  Thus both $a$ and $b$ are the Bucklin winners.
  The difference between the two winners is immediately clear, though:
  Calculating the sum of positions in stage $n+2$ for $a$ and~$b$, we have
  \begin{align*}
   \sumposi{n+2}(a) &= i-1 + n+2, \\
   \sumposi{n+2}(b) &= 1 + (i-1)(n+2).
  \end{align*}
  By increasing either~$i$ (and thus the number of votes) or $n$ (and thus the number of candidates) or both, we can
  boost the difference in the sum of positions between $a$ and $b$ arbitrarily.\qedhere
 \end{enumerate}
\end{proof}



SkS and Bucklin voting differ in their behavior whenever some candidates reach the majority threshold due to the same number of votes.
In such cases, Bucklin returns all these candidates as winners, yet SkS instead first compares their sums of positions: If they are the same, it moves on to the next stage with the relevant candidates, repeating the procedure either until one candidate is the sole winner or until the last stage reached with multiple SkS winners.
Following this, we state the following lemma on winner determination in SkS.
Intuitively, Lemma~\ref{lem:winner-sks-ties-restricted-no-sum-pos} states that once candidates have tied both in terms of the highest score and lowest sum of positions (while also reaching the majority threshold) leading to a reduced candidate set in the next stage, any unique winner beats all other remaining candidates by a strictly higher score.

\begin{lemma}\label{lem:winner-sks-ties-restricted-no-sum-pos}
Let $(C,V)$ be an election and $C' \subseteq C$ some candidates.
Assume that, for some stage~$i$, we have $\scorei{i}(c) = \scorei{i}(d) \geq \maj(V)$, $\scorei{i}(c) > \scorei{i}(x)$, and $\sumposi{i}(c) = \sumposi{i}(d)$ for all $c,d \in C',c \neq d$, and $x \in (C\setminus C')$, i.e., SkS proceeds to stage $i+1$ with the reduced candidate set~$C'$.
Then it holds that
\begin{enumerate}[label=(\roman*)]
 \item for a unique SkS winner~$c$, we have $\scorei{j}(c) > \scorei{j}(d)$ for some $i < j \leq \abs{C}$ and $d \in (C'\setminus \{c\})$;
 \item if there is no unique SkS winner, we have $\scorei{j}(c) = \scorei{j}(d)$ for all $i < j \leq \abs{C}$ and for some $c \in C', d \in C' \setminus \{ c \}$.
\end{enumerate}
\end{lemma}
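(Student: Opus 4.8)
The plan is to trace the nested sequence of candidate sets that SkS carries forward from stage~$i$ on, and to combine a monotonicity invariant for that sequence with one elementary identity linking the stage-to-stage increments of $\sumposi{\cdot}$ and $\scorei{\cdot}$. Note first that the hypothesis ``SkS proceeds to stage $i+1$'' forces $|C'|\ge 2$ and $i<|C|$, and that the winner (if unique) is not declared at stage $i$, so it is declared at some stage $j\ge i+1$. Write $S_{i+1}=C'$ for the set entering stage $i+1$ and, for each $k\ge i+1$ at which SkS has not yet stopped, let $S_{k+1}$ be the set it carries into stage $k+1$: in the notation of Definition~\ref{def:SkS} applied with candidate set $S_k$, first $C_k=\argmax_{e\in S_k}\scorei{k}(e)$, then $C_k'=\argmin_{e\in C_k}\sumposi{k}(e)$, and $S_{k+1}=C_k'$ when $|C_k'|>1$ and $k<|C|$. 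Since every candidate of $C'$ already reaches $\maj(V)$ at stage~$i$ and scores are nondecreasing in the stage, the ``No'' branch never triggers from stage~$i$ on; so SkS either declares a unique winner at the first stage $j$ with $|C_j'|=1$ (and then $S_j\supseteq C_j'=\{c\}$, $|S_j|\ge 2$), or it reaches stage $|C|$ and outputs $C_{|C|}'$, of size at least two. As a warm-up I would then prove, by induction on $k\ge i+1$, the invariant: for all $c,d\in S_k$ one has $\scorei{\ell}(c)=\scorei{\ell}(d)$ and $\sumposi{\ell}(c)=\sumposi{\ell}(d)$ for every $\ell$ with $i\le\ell\le k-1$. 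The base case $k=i+1$ is exactly the hypothesis, and the step is immediate because membership in $S_{k+1}$ requires being in $S_k$, maximizing $\scorei{k}$ over $S_k$, and minimizing $\sumposi{k}$ over that maximizing set, so all of $S_{k+1}$ share both values at stage~$k$.

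The crux is the following one-line observation, which I would isolate as a claim: for every candidate $e$ and every stage $j\ge 2$,
\[
\scorei{j}(e)-\scorei{j-1}(e)=\sum_{v\in V}\eins_{\mathrm{pos}_v(e)=j}\quad\text{and}\quad\sumposi{j}(e)-\sumposi{j-1}(e)=j\cdot\bigl(\scorei{j}(e)-\scorei{j-1}(e)\bigr),
\]
both directly from the definitions of $\scorei{\cdot}$ and $\sumposi{\cdot}$. Hence if two candidates agree on $\scorei{j-1}$, on $\sumposi{j-1}$, and on $\scorei{j}$, then they also agree on $\sumposi{j}$. Combining this with the invariant: whenever $c,d\in S_j$ with $j\ge i+1$ (so $j-1\ge i$) and $\scorei{j}(c)=\scorei{j}(d)$, also $\sumposi{j}(c)=\sumposi{j}(d)$.

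With this, both parts drop out. For~(i), let $c$ be the unique SkS winner, declared at stage $j\ge i+1$ with $|S_j|\ge 2$. If $C_j=\argmax_{e\in S_j}\scorei{j}(e)$ had two distinct elements, the displayed claim would make them tie in $\sumposi{j}$ as well, so $C_j'=\argmin_{e\in C_j}\sumposi{j}(e)$ could not be a singleton, contradicting that $c$ is declared the unique winner at stage~$j$. Hence $C_j=\{c\}$, i.e.\ $\scorei{j}(c)>\scorei{j}(d)$ for every $d\in S_j\setminus\{c\}$, and $S_j\setminus\{c\}$ is a nonempty subset of $C'\setminus\{c\}$; this is assertion~(i). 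For~(ii), if there is no unique winner then SkS reaches stage $|C|$ and outputs $C_{|C|}'\subseteq S_{|C|}$ with $|C_{|C|}'|\ge 2$; choosing distinct $c,d\in C_{|C|}'\subseteq C'$, the invariant gives $\scorei{\ell}(c)=\scorei{\ell}(d)$ for $i\le\ell\le|C|-1$, and $c,d\in C_{|C|}$ (the stage-$|C|$ score-maximizers within $S_{|C|}$) gives equality at $\ell=|C|$ too, so $\scorei{j}(c)=\scorei{j}(d)$ for all $i<j\le|C|$.

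The only genuinely delicate part is the bookkeeping of the first step --- in particular the point that \emph{every} candidate carried past a given stage must agree there on both $\scorei{\cdot}$ and $\sumposi{\cdot}$, which is precisely what makes the invariant propagate; the increment identity does the real work and the remaining case analysis is short, so I expect the main effort to be purely in stating the invariant cleanly and keeping all stage indices in range (using $i<|C|$ and $j\ge i+1$ so that ``stage $j-1$'' exists and is covered by the invariant).
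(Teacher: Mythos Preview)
Your proof is correct and follows essentially the same idea as the paper's: the increment identity $\sumposi{j}(e)-\sumposi{j-1}(e)=j\cdot(\scorei{j}(e)-\scorei{j-1}(e))$ is precisely the formal version of the paper's informal ``$c$ either decreased their sum of positions or gained the same increase in score \ldots\ while simultaneously receiving less sum of positions, [and] both are not possible.'' Your treatment is considerably more careful---you state and prove the nested-set invariant and the identity explicitly, and you actually argue part~(ii), whereas the paper leaves the invariant implicit and dismisses~(ii) as ``follows directly from the definition of SkS.''
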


\begin{proof}
 \begin{enumerate}[label=(\roman*)]
  \item Let SkS be in stage $i$ and assume we have a reduced candidate set $C' \subseteq C$.
  Since we have a reduced candidate set, all candidates from the reduced set must have tied both the score and sum of positions in the previous round.
  Assume that $c$ is the first unique SkS winner in stage $j$ with $i < j \leq \abs{C}$ but $\scorei{j}(c) \leq \scorei{j}(d)$ for some $d \in C^*  \subseteq (C'\setminus \{c\})$.
  Since $c$ is a winner there can be no candidate $d$ with a higher score and thus we have $\scorei{j}(c) = \scorei{j}(d)$.
  Additionally, since $c$ is the unique SkS winner, the sum of positions for $c$ is the uniquely lowest sum of positions among all candidates from $C^*$.
  Since all candidates from $C^*$ must have tied both in highest score and sum of positions in stage $j-1$, we have that $c$ either decreased their sum of positions or gained the same increase in score as the other candidates while simultaneously receiving less sum of positions. As both are not possible, we thus have a contradiction.
  
  \item This follows directly from the definition of SkS.\qedhere
 \end{enumerate}
\end{proof}

\OMIT{
An election where the set of winners differs for SkS and Bucklin is given in Example~\ref{ex:difference-sks-bucklin}.

\begin{example}\label{ex:difference-sks-bucklin}
	Let $(C,V)$ be the election from Example~\ref{ex:sks} with the set of candidates $C = {X,Y,Z}$ and the voters Anne, Bob, Caro and Dave.
	The majority is again $\maj(V)= \floor*{\frac{4}{2}} + 1 = 3$.

	\textbf{Stage 1}
	\begin{center}
		\begin{tabularx}{\linewidth}{rl|CCC|CCC}
			\multicolumn{2}{c}{Votes}&\multicolumn{3}{c}{Bucklin}&\multicolumn{3}{c}{SkS}\\
						 & & $X$ & $Y$ & $Z$ & $X$ & $Y$ & $Z$\\
		 	\midrule
			Anne: &$\color{red} X \color{black} \succ Y \succ Z$ & $\checkmark$ & - & - & $\checkmark$ & - & -\\
			Bob:  &$\color{red} Y \color{black} \succ X \succ Z$ & - & $\checkmark$ & - & - & $\checkmark$ & -\\
			Caro: &$\color{red} Y \color{black} \succ Z \succ X$ & - & $\checkmark$ & - & - & $\checkmark$ & -\\
			Dave: &$\color{red} Z \color{black} \succ X \succ Y$ & - & - & $\checkmark$ & - & - & $\checkmark$\\
			\midrule
			       & \multicolumn{1}{r|}{$\scorei{1}_{\mathcal{E}}(c)$} & 1 & 2 & 1 & (1,1) & (2,2) & (1,1)\\
		\end{tabularx}
	\end{center}

	As already mentioned before, both system have the same result for the first stage. None of the candidates exceed the majority threshold. We therefore continue with stage 2.

	\textbf{Stage 2}
	\begin{center}
		\begin{tabularx}{\linewidth}{rl|CCC|CCC}
			\multicolumn{2}{c}{Votes}&\multicolumn{3}{c}{Bucklin}&\multicolumn{3}{c}{SkS}\\
			& & $X$ & $Y$ & $Z$ & $X$ & $Y$ & $Z$\\
			\midrule
			Anne: &$\color{red} X \succ Y \color{black} \succ Z$ & $\checkmark$ & $\checkmark$ & - &  $\checkmark(1)$ &  $\checkmark(2)$ & -\\
			Bob:  &$\color{red} Y \succ X \color{black} \succ Z$ & $\checkmark$ & $\checkmark$ & - &  $\checkmark(2)$ &  $\checkmark(1)$ & -\\
			Caro: &$\color{red} Y \succ Z \color{black} \succ X$ & - & $\checkmark$ & $\checkmark$ & -    &  $\checkmark(1)$ &  $\checkmark(2)$\\
			Dave: &$\color{red} Z \succ X \color{black} \succ Y$ & $\checkmark$ & - & $\checkmark$ &  $\checkmark(2)$ & -    &  $\checkmark(1)$\\
			\midrule
     		& \multicolumn{1}{r|}{$\scorei{2}_{\mathcal{E}}(c)$} &3 & 3 & 2 & (3,5) & (3,4) & (2,3)\\
		\end{tabularx}
	\end{center}
	In the second stage both $X$ and $Y$ reached the majority threshold.
	The Bucklin System yields multiple winners, $X$ and $Y$, since both of them reached the threshold with an equal number of votes. The SkS instead also compares the sum of positions. Since candidate $Y$ has a lower sum of positions, $Y$ will be the unique SkS winner.
\end{example}
} 

\section{Axiomatic Analysis of SkS}
\label{sec:axiomatic-analysis-of-SkS}

In this section, we consider the following axiomatic properties of voting systems:
the Condorcet criterion,
the majority criterion,
monotonicity,
positive responsiveness,
strong monotonicity,
independence of irrelevant alternatives,
independence of clones,
consistency,
participation,
nondictatorship,
citizens' sovereignty,
resoluteness, and
strategy-proofness.
These have been motivated and discussed in depth in the social choice literature  and have been investigated for an abundance of voting systems (see, e.g., \cite{arr-sen-suz:b:handbook-social-choice-and-welfare-volume-2,bau-rot:b-2nd-edition:economics-and-computation-preference-aggregation-by-voting,tid:j:independence-of-clones,zwi:b:handbook-comsoc-introduction-to-voting-theory}).
In the following, we study which of them are satisfied by SkS and which are not.

\begin{theorem}
  \label{thm:axioms-for-SkS}
    SkS satisfies
    the majority criterion,
    positive responsiveness,
    monotonicity,
    nondictatorship, and
    citizens' sovereignty.
    On the other hand, 
    SkS does not satisfy
    the Condorcet criterion,
    strong monotonicity,
    independence of irrelevant alternatives,
    independence of clones,
    consistency,
    participation,
    resoluteness, nor
    strategy-proofness.
\end{theorem}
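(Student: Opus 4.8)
The plan is to treat the thirteen properties in three groups: (i) the five positive claims, which I would establish directly from the staged definition of SkS; (ii) the two negative claims that come ``for free'' from Theorem~\ref{thm:properties-sks-bucklin}, namely resoluteness (the cyclic family in part~(iii) produces elections with $k\ge 2$ SkS winners) and the Condorcet criterion (by part~(ii) every SkS winner is a Bucklin winner, so any election whose Condorcet winner fails to be a Bucklin winner also witnesses the failure for SkS, and such an election exists --- e.g.\ a ``pairwise-dominant but always mid-ranked'' candidate $d$ facing a three-way score-tie among the other candidates, which reach $\maj(V)$ together and on equal scores while $d$ never leads); and (iii) the remaining negative claims --- strong monotonicity, independence of irrelevant alternatives, independence of clones, consistency, participation, and strategy-proofness --- which I would handle by hand-crafted small elections.

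For the positive side, the \emph{majority criterion} and \emph{citizens' sovereignty} are essentially immediate: Stage~1 of SkS \emph{is} the majority test, and ranking a target candidate $c$ first in every vote makes $c$ the unique Stage-1 winner, so every singleton lies in the range of SkS. \emph{Nondictatorship} then follows from a two-candidate profile on which the two voters disagree on the top candidate, so no fixed voter can be a dictator. For \emph{monotonicity} and \emph{positive responsiveness} I would show that raising a winning candidate $c$ in one vote weakly increases $\scorei{i}(c)$ for every stage $i$ (indeed $c$ enters the top-$i$ block weakly earlier), while it weakly decreases $\scorei{i}(d)$ for every other candidate $d$, and then trace the consequences through the decisive stage $i^*$, the set $C_j$, and the $\sumposi{j}$ tie-breaker, concluding that $c$ cannot be dropped from the surviving set before the stage in which it previously won, and that after an adjacent swap $c$ is the strict single best candidate in the deciding stage.

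For the remaining negative claims I would exhibit explicit profiles. \emph{Consistency}: two sub-electorates sharing a common SkS winner whose union elects a different candidate, engineered so that the merged majority threshold moves the decisive stage. \emph{Participation}: a no-show instance in which an added voter who ranks the incumbent SkS winner near the top nonetheless causes a candidate that voter ranks lower to win --- easy to arrange since SkS, like Bucklin, is stage-based and not monotone under electorate changes. \emph{Independence of clones}: an election in which introducing near-duplicates of a losing candidate perturbs the scores and sums of positions just enough to dislodge the original winner. \emph{Independence of irrelevant alternatives}: a three-candidate profile in which altering only the position of a non-winner, without changing any pairwise comparison between the two relevant candidates, flips which of them SkS selects. \emph{Strong monotonicity}: starting from a profile where $c$ wins, build one in which $c$'s position weakly improves in every vote but the candidates ranked below $c$ are reshuffled, so that a rival overtakes $c$ on $\scorei{j}$ at an earlier stage, or ties and then wins on $\sumposi{j}$. \emph{Strategy-proofness}: either invoke Gibbard--Satterthwaite (SkS is nondictatorial and, by citizens' sovereignty, has range of size at least three) with the standard extension to irresolute rules, or --- cleaner --- give a direct manipulation in which a voter misreports so as to demote a rival and thereby obtains a more preferred winner.

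The step I expect to be hardest is positive responsiveness, together with the monotonicity statement it subsumes. Unlike in Bucklin voting, an SkS winner may have been chosen only after surviving a cascade of score-ties broken on sums of positions, and --- as the paper's own footnote stresses --- a candidate can be eliminated on sum of positions even though it would beat the survivors on score at a later stage. So when $c$ is raised one must simultaneously verify that the first stage at which some candidate reaches $\maj(V)$ does not shift to a rival's advantage, that $c$ survives into the stage where it used to win, and that in the deciding stage $c$ now strictly wins on $\scorei{j}$ or, failing that, on $\sumposi{j}$. Lining up these invariants across all stages, and for both the single-swap and the multi-step cases, is the real work; everything else reduces to routine case analysis or small explicit examples.
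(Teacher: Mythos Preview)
Your plan is correct and close to the paper's, with two organisational differences worth noting. You route the Condorcet failure through Theorem~\ref{thm:properties-sks-bucklin}(ii) (any profile where the Condorcet winner is not a Bucklin winner already witnesses the failure for SkS) and resoluteness through Theorem~\ref{thm:properties-sks-bucklin}(iii); the paper instead builds a single seven-voter profile (Table~\ref{tab:votes-in-proof-thm-axioms-not-satisfied}) in which inserting a clone $Z'$ of $Z$ dethrones the SkS winner $Y$ while $Y$ remains the Condorcet winner, so one construction simultaneously kills independence of clones, IIA, and the Condorcet criterion. Your modular decomposition is cleaner in its reuse of prior results; the paper's unified counterexample is shorter. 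On positive responsiveness you identify exactly the right invariants (raising $c$ weakly increases every $\scorei{i}(c)$ and weakly decreases every $\scorei{i}(d)$ and $\sumposi{i}(c)$), and the paper proceeds the same way but closes the residual tie case by invoking Lemma~\ref{lem:winner-sks-ties-restricted-no-sum-pos}: once candidates have tied on both score and sum of positions and SkS has restricted to that reduced set, any eventual unique winner must win \emph{strictly on score} at some later stage, so a genuine improvement of $c$ cannot leave $c$ tied with a rival through all remaining stages. Leaning on that lemma will spare you much of the stage-by-stage bookkeeping you correctly anticipate as the hard step.
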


\begin{proof}
 Due to space limitations, among the axioms satisfied by SkS, we will only present the proofs for monotonicity and positive responsiveness,
 two closely related, well-known axioms from the social choice literature.
 Recall that a voting system is \emph{monotonic} if for every election $(C,V)$ and for each candidate $c \in C$,
 if
 \begin{enumerate*}[label=(\roman*)]
    \item $c$ wins the election, and
    \item a new election $(C,V')$ is generated from $(C,V)$ by improving the position of $c$ in some votes, leaving the relative rankings of all other candidates unchanged, then $c$ also wins this new election.
 \end{enumerate*}
 Further, \emph{positive responsiveness} means that a (possibly tied) winner $c$ turns a unique winner whenever $c$ is raised in some votes, again leaving everything else the same.

\smallskip\noindent\textbf{Monotonicity:}
  Assume a candidate $c$ wins an SkS election $(C,V)$.
  Now, we improve the positions of $c$ in some votes, leaving the relative rankings of all other candidates unchanged.
  Let $W$ be the resulting list of new votes.
  For the sake of contradiction, assume that $c$ does no longer win the SkS election $(C,W)$.
  However, since $c$ has only improved the positions in some votes, if $c$ has won $(C,V)$ already in the first stage due to a majority of top positions, $c$ must also win $(C,W)$ in the first stage due to a majority of top positions.
  If $c$ has won $(C,V)$ at a later stage, let $i^{*}_c > 1$ be the stage where $c$ meets or exceeds the majority threshold in $(C,V)$.
  Now, since $c$ has only improved the positions in some votes while none of the other candidates can have a better position in the votes of $W$ than in those of~$V$, it follows that
  \begin{enumerate*}[label=(\roman*)]
    \item $c$ reaches the majority threshold at a stage~$j^{*}_c \leq i^{*}_c$ in $(C,W)$ and no other candidate can reach the majority threshold at an earlier stage than in $(C,V)$,
    \item $\scorei{j^{*}_c}_{(C,W)}(c) \geq \scorei{i^{*}_c}_{(C,V)}(c)$,
      and
    \item $\sumposi{j^{*}_c}_{(C,W)}(c) \leq \sumposi{i^{*}_c}_{(C,V)}(c)$.
   \end{enumerate*}
  Hence, $c$ again is an SkS winner in $(C,W)$, a contradiction.
  
\smallskip\noindent\textbf{Positive responsiveness:}
  Let $(C,V)$ be an election and let $c,d \in C$ be two SkS winners of $(C,V)$.
  Recall from Lemma~\ref{lem:winner-sks-ties-restricted-no-sum-pos} that if multiple winners exist in an SkS election, then once the majority threshold is reached at some stage $i$, candidates $c$ and $d$ will continue to tie in both score and sum of positions for every following stage.
  Let $V'$ be a vote list derived from $V$ by improving, without loss of generality, $c$'s position in at least one ballot, while leaving the relative rankings of all other candidates unchanged.
  Assume that $c$ is
  not a unique winner of election $(C,V')$.
  Then either
  \begin{enumerate*}[label=(\roman*)]%
   \item $c$ does not win at all or
   \item $c$ does win alongside someone else, say $x \in C$.
  \end{enumerate*}
  Since SkS satisfies monotonicity, (i)~immediately leads to a contradiction, which leaves us with case~(ii).
  Due to $c$'s improvement, there must be some stage $j$ with $1 \leq j \leq \abs{C}$ where $\scorei{j}_{V'}(c) > \scorei{j}_{V}(c)$ and $\sumposi{i}_{V'}(c) < \sumposi{i}_{V}(c)$.
  Note that for any other candidate $y \in C\setminus \{c\}$ and every stage $i$, we have $\scorei{i}_{V'}(y) \leq \scorei{i}_{V}(y)$ and $\sumposi{i}_{V'}(y) \geq \sumposi{i}_{V}(y)$.
  In particular, $x$'s position did not improve.
  Let $i' \leq i$ be the stage in which $c$ reaches the majority threshold in $(C,V')$.
  If $i' < i$, we have that $c$ reached the majority threshold earlier than in the original election.
  Since $x$ is also a winner, we have $\scorei{i'}_{V'}(c) = \scorei{i'}_{V'}(x)$,
  a contradiction.
  Lastly, if $i' = i$, due to Lemma~\ref{lem:winner-sks-ties-restricted-no-sum-pos}, we have $\scorei{j}_{V'}(c) = \scorei{j}_{V'}(x)$ and $\sumposi{j}_{V'}(c) = \sumposi{j}_{V'}(y)$ for $i' \leq j \leq \abs{C}$, again a contradiction.
%

%
	\begin{table}[t!]
		\begin{subtable}[c]{0.5\columnwidth}
			\centering
			\begin{tabularx}{0.6\linewidth}{rC}
				\toprule
				Anne: & $X \succ Y \succ Z$ \\
				Bob:  & $X \succ Y \succ Z$ \\
				Caro: & $Y \succ X \succ Z$ \\
				Dave: & $Y \succ Z \succ X$ \\
				Ella: & $Z \succ Y \succ X$ \\
				Fred: & $Z \succ Y \succ X$ \\
				Gail: & $Z \succ Y \succ X$ \\
				\bottomrule
			\end{tabularx}
			\subcaption{Vote list $V$ of the original election.}\label{tab-a:votes-original-election-in-proof-thm-axioms-not-satisfied}
		\end{subtable}
		\begin{subtable}[c]{0.5\columnwidth}
			\centering
			\begin{tabularx}{0.6\linewidth}{rC}
				\toprule
				Anne: & $X \succ Y \succ Z \succ Z'$ \\
				Bob:  & $X \succ Y \succ Z \succ Z'$ \\
				Caro: & $Y \succ X \succ Z \succ Z'$ \\
				Dave: & $Y \succ Z \succ Z' \succ X$ \\
				Ella: & $Z \succ Z' \succ Y \succ X$ \\
				Fred: & $Z \succ Z' \succ Y \succ X$ \\
				Gail: & $Z \succ Z' \succ Y \succ X$ \\
				\bottomrule
			\end{tabularx}
		\subcaption{Vote list $V'$ after adding the clone $Z'$.}\label{tab-b:votes-after-clone-added-in-proof-thm-axioms-not-satisfied}
		\end{subtable}
		\caption{Vote lists of the two elections in the proof of Theorem~\ref{thm:axioms-for-SkS}.}\label{tab:votes-in-proof-thm-axioms-not-satisfied}
	\end{table}
        
\smallskip\noindent\textbf{Condorcet criterion, independence of clones, and independence of irrelevant alternatives:}
        Again, due to space limitations, we will present just one counterexample in Table~\ref{tab:votes-in-proof-thm-axioms-not-satisfied} showing that SkS satisfies neither the Condorcet criterion nor independence of clones nor independence of irrelevant alternatives
        (referring to,
        e.g., \cite{arr-sen-suz:b:handbook-social-choice-and-welfare-volume-2,bau-rot:b-2nd-edition:economics-and-computation-preference-aggregation-by-voting,tid:j:independence-of-clones,zwi:b:handbook-comsoc-introduction-to-voting-theory} for the formal definitions of these properties).
	Let $C=\{X,Y,Z\}$ and $C'=\{X,Y,Z,Z'\}$.
	The list $V$ of votes over~$C$ is depicted in Table~\ref{tab-a:votes-original-election-in-proof-thm-axioms-not-satisfied}.
	Since $\abs{V} = 7$, we have $\maj(V) = 4$.
	Further, we have $\scorei{1}_{(C,V)}(c)=2$ for $c \in \{X,Y\}$ and $\scorei{1}_{(C,V)}(Z)=3$.
	Since neither candidate reaches the required threshold in stage~$i=1$, we continue with the next stage.
	Here, we have $\scorei{2}_{(C,V)}(X)=3$, $\scorei{2}_{(C,V)}(Z)=4$ and $\scorei{2}_{(C,V)}(Y)=7$.
	Now, $Y$ is the unique SkS winner because $\scorei{2}_{(C,V)}(c) \ge \maj(V)$ for $c\in \{Y,Z\}$, and
	$\scorei{2}_{(C,V)}(Y) > \scorei{2}_{(C,V)}(Z)$.
	Adding a clone $Z'$ of candidate $Z$ yields the election $(C',V')$ depicted in Table~\ref{tab-b:votes-after-clone-added-in-proof-thm-axioms-not-satisfied}.
	Note that the clone $Z'$ is always ranked in the position behind $Z$ and the rest of the votes is left untouched.
	The scores in the first stage of the election remain the same.
	However, in the second stage, the scores change with respect to the election $(C,V)$.
	That is, the scores of $X$ and $Z$ are unchanged, while $Y$ loses three points to the clone~$Z'$.
	Now, we have $\scorei{2}_{(C',V')}(c)=4$ for $c \in \{Y,Z\}$ and $\scorei{2}_{(C',V')}(d)=3$ for $d \in \{X,Z'\}$.
	Since
		$\maj(V') = \scorei{2}_{(C',V')}(Y) = \scorei{2}_{(C',V')}(Z)$,
	we have to consider the sum of positions for $Y$ and $Z$ to determine the winner.
	We have
	$\sumposi{2}_{(C',V')}(Z) = 5 < 6 = \sumposi{2}_{(C',V')}(Y)$.
	Thus, after the addition of the clone~$Z'$, the winner of the SkS election changed from $Y$ to~$Z$, and SkS satisfies neither independence of clones nor independence of irrelevant alternatives.
	Note that, in the second election, $Y$ is
        still a Condorcet winner and yet loses the
        SkS election to $Z$ now.
	Therefore, SkS does not satisfy the Condorcet criterion.
\end{proof}

As Bucklin and SkS are heavily related, their axiomatic properties often align.
Nonetheless, they are not always the same, as shown by positive responsiveness, which SkS satisfies while Bucklin does not.\footnote{%
 It is easy to see that Bucklin violates positive responsiveness by a small counterexample.
 Consider the candidates $C=\{a,b,c\}$ and the votes $a\succ b\succ c\succ d$, $b\succ a\succ c\succ d$, and $c\succ d\succ a\succ b$.
 Both $a$ and $b$ are the Bucklin winners in stage two.
 We can now improve $b$'s position in the last vote to $c\succ d\succ b\succ a$ without changing the election result.
}
A comprehensive overview of all properties considered here is given in Table~\ref{tab:properties}.

\begin{table}[h!]
 \centering
 \small
 \begin{tabularx}{\linewidth}{rLLLLLLLLLLLLLL}
  \toprule
  &
  \rotatebox{55}{Condorcet Criterion} &
  \rotatebox{55}{Majority} &
	\rotatebox{55}{Monotonicity} &
	\rotatebox{55}{Positive Responsiveness} &
  \rotatebox{55}{Strong Monotonicity} &
  \rotatebox{55}{\parbox{\widthof{Independence of Clones}}{Independence of Irrelevant Alternatives}} &
  \rotatebox{55}{Independence of Clones} &
  \rotatebox{55}{Consistency} &
  \rotatebox{55}{Participation} &
  \rotatebox{55}{Nondictatorship} &
  \rotatebox{55}{Citizens' Sovereignty} &
  \rotatebox{55}{Resoluteness} &
  \rotatebox{55}{Strategy-Proofness}&\\
  \midrule
  Bucklin &\xmark & \cmark  & \cmark & \xmark &  \xmark  & \multicolumn{1}{c}{\xmark} &\xmark & \xmark & \xmark & \cmark & \cmark & \xmark &\xmark &\\
  \midrule
  SkS &\xmark & \cmark  & \cmark & \cmark &  \xmark  & \multicolumn{1}{c}{\xmark} &\xmark & \xmark & \xmark & \cmark & \cmark & \xmark &\xmark &\\
  \bottomrule
 \end{tabularx}
 \normalfont
 \caption{Comparison of axiomatic properties between SkS and Bucklin (see, e.g., \cite{bau-rot:b-2nd-edition:economics-and-computation-preference-aggregation-by-voting}).}
 \label{tab:properties}
\end{table}

\section{Complexity of Manipulation in SkS}
\label{sec:complexity}

In this section, we will study the complexity of manipulation in SkS.
As mentioned in the introduction, Bartholdi et al.~\cite{bar-orl:j:polsci:strategic-voting,bar-tov-tri:j:manipulating} initiated the study of manipulating voting systems in terms of determining the computational complexity of the corresponding decision problems.
They considered the most restricted problem variant: constructive manipulation by a single manipulator seeking to make a preferred candidate win an unweighted election.

The most general problem, due to Conitzer et al.~\cite{con-san-lan:j:when-hard-to-manipulate}, considers constructive coalitional weighted manipulation.
This means that there is a group (or \emph{coalition}) of manipulators who jointly seek to make their preferred candidate win a weighted election.
In a \emph{weighted} election, voters may not all have the same voting power; rather, each voter is assigned a nonnegative integer weight, and if the voting system (as is the case for SkS) uses scores, these weights need to be taken into account when determining the candidates' scores.
For SkS, a weighted election is represented as $(C,V, W_V)$, where $W_V = (w_1, \dots, w_n)$ is a list of nonnegative integer weights with $w_i$ being the weight of the $i$-th voter.
The majority threshold now is
$\maj(V) = \floor*{\nicefrac{\left(\sum_{v \in V} w_v\right)}{2}} + 1$,
and $\scorei{i}(c)$ for $i \in [m]$ and $c\in C$ now is the sum of the weights of those voters who rank $c$ among their top-$i$ positions, i.e., $\scorei{i}(c) = \sum_{v\in V}\left(\eins_{\mathrm{pos}_v(c) \leq i} \cdot w_v\right)$.
The definition of $\sumposi{i}(c)$ is analogously weighted, i.e., $\sumposi{i}(c) = \sum_{v\in V}\left(\mathrm{pos}_v(c) \cdot \eins_{\mathrm{pos}_v(c) \leq i} \cdot w_v\right)$, and SkS for weighted elections works as described in Definition~\ref{def:SkS}.
Now, we are ready to define the most general manipulation problem for any voting system~$\mathcal{E}$:


\EP{CCWM}{$\mathcal{E}$-Constructive-Coalitional-Weighted-Manipulation ($\mathcal{E}$-CCWM)}
   {A set $C$ of candidates, a list $V$ of nonmanipulative votes over $C$ each having a nonnegative integer weight collected in a list~$W_V$, a list $W_S$ of the manipulators' nonnegative integer weights, where the manipulators' votes in $S$ are as yet unspecified, $V\cap S = \emptyset$, and a distinguished candidate $c\in C$.}
   {Can the votes in $S$ be set such that $c$ is the unique $\mathcal{E}$ winner of the weighted election $(C, V \cup S, W_{V \cup S})$?}

The following special variants of $\mathcal{E}$-\textsc{CCWM} have been studied in COMSOC (see, e.g., \cite{con-wal:b:handbook-comsoc-manipulation,bau-rot:b-2nd-edition:economics-and-computation-preference-aggregation-by-voting}):
$\mathcal{E}$-\textsc{Constructive-Coalitional-Manipulation} ($\mathcal{E}$-\textsc{CCM}) where the considered elections are \emph{unweighted} (i.e., every voter has unit weight) and
$\mathcal{E}$-\textsc{Constructive-Manipulation} ($\mathcal{E}$-\textsc{CM}) where the considered elections are \emph{unweighted} and there is only a \emph{single manipulator}.
Further, when the goal of the manipulation is not to make a distinguished candidate $c$ win alone but to prevent $c$ from becoming the sole winner, we obtain the \emph{destructive} variants of the above three problems:
$\mathcal{E}$-\textsc{Destructive-Coalitional-Weighted-Manipulation} ($\mathcal{E}$-\textsc{DCWM}),
$\mathcal{E}$-\textsc{Destructive-Coalitional-Mani\-pulation} ($\mathcal{E}$-\textsc{DCM}), and
$\mathcal{E}$-\textsc{Destructive-Manipulation} ($\mathcal{E}$-\textsc{DM}).

We focus on the \emph{unique-winner model}, as defined in the above problems.\footnote{One could also study the \emph{nonunique-winner model} where the constructive goal is merely to make $c$ a winner (possibly among several winners), and the destructive goal is to ensure that $c$ does not win at all.}
We expect the reader to be familiar with the notions of $\np$-hardness and $\np$-completeness based on the polynomial-time many-one reducibility $\textmanyone$ (see, e.g., \cite{gar-joh:b:int,kar:b:reducibilities,pap:b:complexity,rot:b:cryptocomplexity})

\begin{theorem}
  \label{thm:sks-ccwm-np-complete}
	SkS-\textsc{CCWM} is \np-complete, even for instances
	with only four candidates.
\end{theorem}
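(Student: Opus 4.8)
The plan is to prove membership in $\np$ and then $\np$-hardness by a reduction from \textsc{Partition}. Membership is routine: once the manipulators' votes in $S$ are fixed, the SkS winner of the weighted election $(C, V\cup S, W_{V\cup S})$ is computable in polynomial time, since the rule passes through at most $\abs{C}$ stages and each stage only requires evaluating the weighted scores $\scorei{i}$ and sums of positions $\sumposi{i}$; a nondeterministic machine therefore guesses the votes in $S$ and verifies that $c$ is the unique SkS winner, placing SkS-\textsc{CCWM} in $\np$.

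For hardness, I would reduce from the $\np$-complete problem \textsc{Partition}: given positive integers $k_1, \dots, k_t$ with $\sum_{i=1}^{t} k_i = 2K$, decide whether some $I \subseteq [t]$ satisfies $\sum_{i\in I} k_i = K$. The constructed instance has candidate set $C = \{c,a,b,d\}$ with distinguished candidate $c$, manipulators $s_1, \dots, s_t$ of weights $k_1, \dots, k_t$, and a fixed block of nonmanipulative votes with large weights designed so that, writing $n_x$ for the total nonmanipulative weight ranking $x$ among its top three positions: (i)~no candidate can reach the majority threshold in stage~$1$ or~$2$, no matter how $S$ is filled in; (ii)~$n_a = n_b = n_c + K$, the gap $n_c - n_d$ is large, and the threshold is $\maj = n_c + 2K$; and (iii)~among the nonmanipulative votes, the stage-$3$ sum of positions of $c$ is strictly smaller than those of $a$ and of $b$, with all remaining needed strict inequalities holding by slack. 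Realizing these offsets with polynomially many nonmanipulative votes of polynomially bounded weights is routine bookkeeping.

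For correctness, let $u, p, q, r$ denote the total manipulator weight placing $c, a, b, d$, respectively, in the last ($4$th) position, so that $u + p + q + r = 2K$ and the stage-$3$ scores are $n_c + 2K - u$, $n_c + 3K - p$, $n_c + 3K - q$, and $n_d + 2K - r$ for $c, a, b, d$. A short weight-counting argument then shows (a)~that at least one of $c, a, b$ always reaches $\maj$ in stage~$3$, so stage~$3$ is the decisive stage and the trivial last stage~$4$ --- where all surviving candidates tie in score --- is never reached while $c$ is still alive; and (b)~that $c$ can appear in the stage-$3$ argmax --- a prerequisite for $c$ to win, since otherwise $c$ is discarded at stage~$3$ and the election never returns to it --- only if $u = r = 0$ and $p = q = K$, i.e., only if the manipulators split into two groups of total weight $K$ each, which is precisely a \textsc{Partition} solution. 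For the converse, given such a split one lets the manipulators in $I$ cast $c \succ a \succ d \succ b$ and the remaining manipulators cast $c \succ b \succ d \succ a$; then in stage~$3$ candidates $c$, $a$, $b$ all reach $\maj$ with equal score $n_c + 2K$, candidate $d$ stays strictly below, and $c$ has the strictly smallest stage-$3$ sum of positions because every manipulator ranks $c$ first --- so $c$ is the unique SkS winner. Hence $c$ can be made the unique SkS winner if and only if the \textsc{Partition} instance is a yes-instance.

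The main obstacle is the robustness of the ``only if'' direction: the nonmanipulative gadget must defeat \emph{every} off-template manipulative strategy --- in particular strategies that demote $c$, that park $d$ in last position to boost both $a$ and $b$ simultaneously, that try to stall the decisive stage to stage~$4$ and win on sums of positions there, or that settle for $c$ being a mere co-winner. Choosing the nonmanipulative weights so that they are simultaneously small enough that nobody crosses $\maj$ before stage~$3$, yet calibrated so that a balanced manipulator split yields a strict (tie-broken) victory for $c$ while any imbalance eliminates $c$ already at stage~$3$ --- all within a four-candidate election, where the last stage is degenerate --- is the delicate part of the argument.
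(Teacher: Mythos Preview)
Your proposal is correct and follows essentially the same approach as the paper: both establish $\np$-hardness by adapting the \textsc{Partition} reduction for Bucklin-\textsc{CCWM} due to Faliszewski et al.\ to a four-candidate instance, with the extra care of calibrating the nonmanipulative votes so that the sum-of-positions tie-breaker resolves in $c$'s favor exactly when the manipulators split their weight $K$--$K$. The paper itself gives only a brief sketch along these lines, so your outline is already more detailed than what appears there; the one place to tighten is the phrase ``stage~$4$ is never reached while $c$ is still alive,'' since in the only-if direction you do not need (and do not actually argue) that---your real argument is simply that $c$ winning forces $c$ into the stage-$3$ argmax, which already yields the \textsc{Partition} split.
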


\np-hardness of SkS-\textsc{CCWM} can be shown by suitably extending the proof of Bucklin-\textsc{CCWM}
being $\np$-hard, which is due to Faliszeswki et al.~\cite{fal-rei-rot-sch:j:manipulation-bribery-campaign-management-in-bucklin-fallback-voting},
via a reduction from the well-known $\np$-complete problem \textsc{Partition}:
Given a list $(a_1, a_2,\dots, a_k)$ of positive integers with $\sum_{i=1}^k a_i = 2K$, does there exist a subset $A \subseteq [k]$ such that $\sum_{i \in A} a_i = \sum_{i \in [k] \setminus A} a_i = K$?
In particular, we have to take care of all cases in which the sum of positions plays a role in the winner determination.

The complexity of SkS-\textsc{CCM} and SkS-\textsc{CM} remains open.

We now turn to the destructive case and study the complexity of SkS-\textsc{DCWM}.
Unlike its constructive variant, this problem is easy to solve, which can be achieved by employing the algorithm of Faliszeswki et al.~\cite{fal-rei-rot-sch:j:manipulation-bribery-campaign-management-in-bucklin-fallback-voting} for Bucklin-\textsc{DCWM}
and by appropriately handling the cases in which the sum of positions plays a role for determining the SkS winners.
Since \textsc{DCM} is a special case of \textsc{DCWM} and \textsc{DM} in turn is a special case of \textsc{DCM}
(see, e.g., \cite{bau-rot:b-2nd-edition:economics-and-computation-preference-aggregation-by-voting}),
we obviously have:
$\textsc{DM} \manyone \textsc{DCM} \manyone \textsc{DCWM}$.
%

\begin{theorem}\label{thm:sks-dcwm-in-p}
SkS-\textsc{DCWM} (and hence, SkS-\textsc{DCM} and SkS-\textsc{DM}) are in~$\p$.
\end{theorem}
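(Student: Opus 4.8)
The plan is to show SkS-\textsc{DCWM}~$\in\p$; since $\textsc{DM}\manyone\textsc{DCM}\manyone\textsc{DCWM}$, the other two memberships follow. First I would record that SkS winner determination is polynomial-time computable, also on weighted elections: one runs through the at most $\abs{C}$ stages, and in each stage the weighted quantities $\scorei{i}(\cdot)$ and $\sumposi{i}(\cdot)$, the set of maximum-score candidates, and (within it) the set of minimum-sum-of-positions candidates are all computable in time polynomial in $\abs{C}$ and $\abs{V}$.

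The algorithm is the natural adaptation of the Faliszewski et al.\ procedure for Bucklin-\textsc{DCWM}. For each candidate $d\in C\setminus\{c\}$, build the manipulator list $S_d$ in which every manipulator casts (with its prescribed weight from $W_S$) a ballot that ranks $d$ on top, $c$ at the bottom, and the remaining candidates in between; compute the SkS winner set of $(C,V\cup S_d,W_{V\cup S_d})$; accept if and only if for at least one such $d$ the candidate $c$ is not the unique SkS winner. Soundness is immediate, as an accepting run exhibits an explicit successful destructive manipulation. For completeness, assume some manipulator list $S^{*}$ makes $c$ a non-unique SkS winner of $(C,V\cup S^{*})$; since SkS always returns a nonempty winner set, some $d\neq c$ is an SkS winner of $(C,V\cup S^{*})$, and I would argue that $S_d$ is already a successful manipulation. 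The levers are: placing $d$ on top in every manipulator ballot makes $\scorei{i}(d)$ as large as possible, and placing $c$ at the bottom makes $\scorei{i}(c)$ as small as possible for every stage $i<\abs{C}$, over all manipulator profiles consistent with $W_S$; moreover SkS is monotone (Theorem~\ref{thm:axioms-for-SkS}), and ties in both score and sum of positions propagate through the shrinking candidate set exactly as in Lemma~\ref{lem:winner-sks-ties-restricted-no-sum-pos}. Combining these, in $(C,V\cup S_d)$ candidate $d$ reaches the majority threshold no later and never with a smaller score than in $(C,V\cup S^{*})$, whereas $c$ never does better; hence at the decisive stage either $c$ drops out of the maximum-score set (and is eliminated or loses outright) or it still ties with $d$ there, and in the latter case $d$'s sum of positions, which is minimal once $d$ is placed on top everywhere, together with Lemma~\ref{lem:winner-sks-ties-restricted-no-sum-pos} prevents $c$ from being singled out as the unique winner.

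The main obstacle is exactly this completeness step, and it is where the careful handling of the cases in which the sum of positions matters enters. Unlike Bucklin, SkS breaks ties by comparing sums of positions and then restricting attention to the tied candidates, so one must verify that none of these secondary comparisons, nor a possibly earlier decisive stage induced by where the remaining candidates are placed in $S_d$, can backfire and make $c$ the unique winner; in particular one has to check that iterating over all spoiler candidates $d$ already captures every useful manipulation, so that the precise order of the remaining candidates inside $S_d$ is immaterial to the decision. I expect this to reduce to a short case distinction built on Lemma~\ref{lem:winner-sks-ties-restricted-no-sum-pos}, mirroring the bookkeeping of sum-of-positions cases done in the $\np$-completeness proof of the constructive variant (Theorem~\ref{thm:sks-ccwm-np-complete}).
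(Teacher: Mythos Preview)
Your proposal is correct and follows essentially the same approach as the paper: adapt the Faliszewski et al.\ algorithm for Bucklin-\textsc{DCWM}, iterating over each potential spoiler $d\neq c$ with all manipulators ranking $d$ on top and $c$ at the bottom, and separately handle the sum-of-positions tie-breaking cases. The paper itself gives only a one-sentence sketch of this result (pointing to the Bucklin algorithm and ``appropriately handling the cases in which the sum of positions plays a role''), so your write-up is already at least as detailed as the paper's own treatment; the obstacle you flag in the completeness step is precisely the extra bookkeeping the paper alludes to but does not spell out.
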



\section{Complexity of Control in SkS}
\label{sec:control-sks}

In the previous section, a coalition of manipulators seeked to influenced the election in their favor.
In a similar fashion, an external agent (called the \emph{chair}) having the power to change the structure of an election, may
exert this control power, again with the goal of making some favored candidate win (or preventing a despised candidate's victory).
The study of electoral control was initiated by Bartholdi et al.~\cite{bar-tov-tri:j:control} for the constructive case and by Hemaspaandra et al.~\cite{hem-hem-rot:j:destructive-control} for the destructive case.
Since then, control has been investigated for many voting systems (see, e.g.,
\cite{bau-rot:b-2nd-edition:economics-and-computation-preference-aggregation-by-voting,fal-rot:b:handbook-comsoc-control-and-bribery}), and in particular for Bucklin voting by Erd\'{e}lyi et al.~\cite{erd-fel-rot-sch:j:control-in-bucklin-and-fallback-voting-theoretical,erd-fel-rot-sch:j:control-in-bucklin-and-fallback-voting-experimental}.
Out of the many standard control types from the literature (again, see \cite{bau-rot:b-2nd-edition:economics-and-computation-preference-aggregation-by-voting,fal-rot:b:handbook-comsoc-control-and-bribery}),
we focus on a few select cases, namely on (constructive and destructive) control by either adding voters or deleting candidates, starting with the latter type of control for any voting system~$\mathcal{E}$:

\EP{CCDC}{$\mathcal{E}$-Constructive-Control-by-Deleting-Candidates ($\mathcal{E}$-CCDC)}
{A set $C$ of candidates, a list $V$ of votes over~$C$, a distinguished candidate $c\in C$, and a nonnegative integer $k$.}
{Does there exist a subset of candidates $C' \subseteq C$, $\abs{C'} \leq k$, such that $c$ is the unique $\mathcal{E}$ winner of the election $(C \setminus C', V)$, where the votes in $V$ are now restricted to the candidates in $C \setminus C'$?}

$\mathcal{E}$-\textsc{Destructive-Control-by-Deleting-Candidates} ($\mathcal{E}$-\textsc{DCDC}), the destructive variant of $\mathcal{E}$-\textsc{CCDC}, is defined analogously, except that now the chair's goal is to ensure that the distinguished candidate $c$ is \emph{not} a unique winner of $(C \setminus C', V)$ (i.e., $c$ may still win, but only along with other candidates), and it is not allowed to simply delete~$c$ (i.e., we require $c \not\in C'$).
In $\mathcal{E}$-\textsc{Constructive-Control-by-Adding-Voters} ($\mathcal{E}$-\textsc{CCAV}), in addition to the candidates~$C$, the votes~$V$ over~$C$, the distinguished candidate $c \in C$, and a nonnegative integer $k$ (now as an addition limit), we are given a list of additional votes~$U$ over~$C$ of as yet \emph{unregistered} voters, and we ask whether there exists a sublist of votes $U' \subseteq U$ with $\abs{U'} \leq k$ such that $c$ is the unique $\mathcal{E}$ winner of the election $(C, V \cup U')$.
Again, the destructive variant of $\mathcal{E}$-\textsc{CCAV}, $\mathcal{E}$-\textsc{Destructive-Control-by-Adding-Voters} ($\mathcal{E}$-\textsc{DCAV}), is defined analogously, except that the chair's goal now is to prevent the distinguished candidate $c$ from becoming a unique winner.
Again, we focus on the unique-winner model.

We now present our results on the above control problems for SkS.

\begin{theorem}\label{thm:sks-ccdc-dcdc-ccav}
	SkS-\textsc{CCDC}, SkS-\textsc{DCDC}, and SkS-\textsc{CCAV} are $\np$-complete in the unique-winner model.
\end{theorem}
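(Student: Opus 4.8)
The plan is to prove each of the three problems NP-complete by first establishing membership in NP and then NP-hardness via reductions that adapt the known hardness constructions for Bucklin control due to Erd\'{e}lyi et al.~\cite{erd-fel-rot-sch:j:control-in-bucklin-and-fallback-voting-theoretical}. Membership in NP is routine for all three: a nondeterministic machine guesses the set $C'$ of at most $k$ candidates to delete (for SkS-\textsc{CCDC} and SkS-\textsc{DCDC}) or the sublist $U' \subseteq U$ of at most $k$ voters to add (for SkS-\textsc{CCAV}), and then uses the polynomial-time winner-determination procedure of Definition~\ref{def:SkS} to check whether $c$ is (constructive case) or is not (destructive case) the unique SkS winner of the resulting election; the budget constraint is trivially verifiable.

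For NP-hardness I would reduce from a suitable covering problem---\textsc{Hitting Set} (or \textsc{Exact Cover by 3-Sets}) for the deleting-candidates variants and \textsc{Exact Cover by 3-Sets} for SkS-\textsc{CCAV}---mirroring the corresponding Bucklin reductions. The guiding principle is that $\scorei{i}(\cdot)$ coincides with the Bucklin score, so whenever a Bucklin construction makes $c$ the \emph{strict} score maximizer at the decisive stage $i^*$, Theorem~\ref{thm:properties-sks-bucklin}(i) hands us that $c$ is the unique SkS winner, and whenever $c$ fails to be a Bucklin winner, Theorem~\ref{thm:properties-sks-bucklin}(ii) hands us that $c$ is not even an SkS winner. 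The only gap between the two systems occurs when $c$ ties in score at stage $i^*$ with other candidates---precisely where Bucklin declares several winners but SkS resolves the tie via $\sumposi{i^*}$, and possibly carries the tie into later stages (Lemma~\ref{lem:winner-sks-ties-restricted-no-sum-pos}). To close this gap I would augment each Bucklin construction with a pool of ``filler'' candidates ranked uniformly near the bottom of all relevant votes, chosen so that they never reach the majority threshold before stage $i^*$ while serving to shift the sums of positions of the critical candidates by prescribed amounts. In the constructive reductions the fillers are placed so that, on \textsc{Yes}-instances, $c$ either strictly dominates in score or has a strictly smaller $\sumposi{i^*}$ than every co-threshold candidate; in the destructive reduction, so that on \textsc{Yes}-instances $c$ is either held below threshold at stage $i^*$ or is driven into a permanent score-and-sum-of-positions tie with another candidate, which by Lemma~\ref{lem:winner-sks-ties-restricted-no-sum-pos} blocks a unique victory.

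The main obstacle is the interaction between the chair's structural change and the sum-of-positions component. For candidate deletion this is especially delicate: removing a candidate not only changes which candidates clear the threshold (the Bucklin-level effect) but also moves \emph{every} candidate ranked below it up by one position, perturbing $\sumposi{i}$ globally. Hence the filler candidates and the placement of $c$ relative to them must be arranged so that the decisive sum-of-positions comparison on \textsc{Yes}-instances is robust to the intended deletions, while on \textsc{No}-instances no alternative deletion set can slip $c$ into a unique win through an unintended sum-of-positions effect; verifying this robustness, together with the usual case analysis showing that the budget $k$ is met exactly on \textsc{Yes}-instances of the source problem, is where most of the work lies. For SkS-\textsc{CCAV} the analogous difficulty is milder, since adding a voter only increases scores monotonically; there the remaining care is to ensure that on \textsc{No}-instances the element candidate reaching the threshold early cannot be out-tie-broken by $c$, which I would guarantee by placing $c$ low enough in the added ballots that $c$'s edge over the element candidates, when present, is purely a score edge.
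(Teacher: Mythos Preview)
Your proposal is correct and takes essentially the same approach as the paper: adapt the Bucklin control reductions of Erd\'{e}lyi et al.~\cite{erd-fel-rot-sch:j:control-in-bucklin-and-fallback-voting-theoretical}, using Theorem~\ref{thm:properties-sks-bucklin} to transfer the strict-score-dominance cases directly and handling the residual score-tie cases by controlling the sum of positions. The paper's own proof is a single sentence stating exactly this plan (``directly apply or adapt the proof of the respective control problem for Bucklin voting''), so your proposal in fact supplies considerably more detail---including the explicit identification of the deletion-induced global shift in $\sumposi{i}$ as the main technical obstacle---than the paper itself does.
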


For all of the above control problems we can either directly apply or adapt the proof of the respective control problem for Bucklin voting due to Erd\'{e}lyi et al.~\cite{erd-fel-rot-sch:j:control-in-bucklin-and-fallback-voting-theoretical}.

In contrast, SkS is vulnerable to \emph{destructive} control by adding voters.
We note that Bucklin-\textsc{DCAV} is in~$\p$ as well, i.e., Erd\'{e}lyi et al.~\cite{erd-fel-rot-sch:j:control-in-bucklin-and-fallback-voting-theoretical} provide a polynomial-time algorithm solving this problem.
We know that by Theorem~\ref{thm:properties-sks-bucklin}, any unique Bucklin winner is also a unique SkS winner and thus any Yes-instance of Bucklin-\textsc{DCAV} is also a Yes-instance of SkS-\textsc{DCAV} in the unique-winner model.
The reverse, however, is not always true, since in the case of multiple Bucklin winners, the despised candidate could still be a unique winner in SkS due to the sum of positions.
Therefore, in order to adapt it to SkS, we need to extend the existing algorithm for Bucklin-\textsc{DCAV} so as to take each additional case into consideration.

\begin{theorem}
  \label{thm:sks-dcav}
  In the unique-winner model, SkS-\textsc{DCAV} is in~$\p$.
\end{theorem}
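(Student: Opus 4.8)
The plan is to adapt the polynomial-time algorithm of Erd\'{e}lyi et al.~\cite{erd-fel-rot-sch:j:control-in-bucklin-and-fallback-voting-theoretical} for Bucklin-\textsc{DCAV}. By Theorem~\ref{thm:properties-sks-bucklin}, every unique Bucklin winner is a unique SkS winner and every SkS winner is a Bucklin winner, so for a given $U' \subseteq U$ the distinguished candidate $c$ fails to be the unique SkS winner of $(C, V \cup U')$ in exactly two ways: either (a)~$c$ is not a Bucklin winner of $(C, V \cup U')$ at all, or (b)~$c$ is one of at least two Bucklin winners but the sum-of-positions refinement does not single it out --- that is, at the decisive stage some Bucklin winner has a strictly smaller sum of positions than $c$, or $c$ survives that stage yet never becomes the sole winner at a later stage. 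Case~(a) is essentially the situation already handled for Bucklin; case~(b) is the new ingredient that forces the algorithm to keep track of sums of positions. By Lemma~\ref{lem:winner-sks-ties-restricted-no-sum-pos}, once $c$ and some other candidate have tied in both score and sum of positions at the decisive stage, that candidate keeps witnessing~(b) as long as it remains at least as good as $c$ in score at every subsequent stage until $c$ drops out.

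The algorithm iterates over polynomially many \emph{target configurations}, each consisting of a spoiler candidate $d \in C \setminus \{c\}$, the intended decisive stage $i^* \in [m]$, the number $t \in \{0, 1, \dots, k\}$ of unregistered votes to add (which fixes the new majority threshold $q = \floor*{(\abs{V}+t)/2} + 1$), and --- in case~(b) --- the stage (if any) at which $c$ is eliminated from contention; by Lemma~\ref{lem:winner-sks-ties-restricted-no-sum-pos} and the fact that each stage either decides the election or strictly shrinks the set of candidates still in play, this last datum ranges over only polynomially many values. For each configuration the algorithm decides the feasibility question: is there $U' \subseteq U$ with $\abs{U'} = t$ such that in $(C, V \cup U')$ (1)~no candidate reaches $q$ before stage $i^*$ while some candidate reaches it at stage~$i^*$, and (2)~either $\scorei{i^*}(d) > \scorei{i^*}(c)$ (so $c$ is not even a Bucklin winner), or $\scorei{i^*}(d) = \scorei{i^*}(c)$ with $\sumposi{i^*}(d) \le \sumposi{i^*}(c)$ and, in the equality case, $\scorei{j}(d) \ge \scorei{j}(c)$ for every stage $j$ from $i^*$ to the stipulated eliminating stage? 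Conditions (1) and (2) are designed so that any $U'$ satisfying them makes $c$ fail to be the unique SkS winner, and, conversely, any $U'$ of size at most $k$ that does so satisfies (1) and (2) for the configuration it induces; hence the instance is a Yes-instance if and only if some configuration is feasible.

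The feasibility test for a fixed configuration is the tractable core of the proof. Once $t$, $q$, $i^*$, and $d$ are fixed, every quantity that matters --- the values $\scorei{\ell}(e)$ and $\sumposi{\ell}(e)$ for each candidate $e$ and each stage $\ell$ --- is additive over the chosen votes, so the votes in $U$ split into finitely many contribution types and an optimal choice of $t$ of them can be found by a greedy/exchange argument or a small network-flow formulation, exactly as in the Bucklin-\textsc{DCAV} algorithm, now additionally bookkeeping the sum of positions of $d$ and $c$ at stage $i^*$ and the scores of $d$ and $c$ at the later stages. Since there are polynomially many configurations and each feasibility test runs in polynomial time, the overall procedure is polynomial, placing SkS-\textsc{DCAV} in~$\p$.

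The step I expect to be the main obstacle is the coupling between score and sum of positions: an added ballot that ranks $d$ in a position $p \le i^*$ adds $1$ to $\scorei{i^*}(d)$ but adds $p$ to $\sumposi{i^*}(d)$, so lifting $d$ up to --- but not beyond --- a tie with $c$ in score, while keeping $d$'s sum of positions no larger than $c$'s and without simultaneously pushing $c$ or a third candidate over the threshold too early (which would move the decisive stage off $i^*$), is a genuinely constrained choice rather than a plain greedy one. The crux is to show that, once the spoiler, the decisive stage, and the number of added votes are all fixed, the remaining constraints are monotone or linear in the chosen vote multiset, so that feasibility still reduces to a matching-type problem --- and that this sum-of-positions bookkeeping is the only part that goes beyond the known Bucklin-\textsc{DCAV} algorithm, whose treatment of the decisive-stage condition and of the conditions ensuring that $c$ does not win can be carried over essentially unchanged.
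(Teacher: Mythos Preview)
Your plan matches the paper's in spirit: both adapt the Bucklin-\textsc{DCAV} algorithm of Erd\'{e}lyi et al.\ and isolate the sum-of-positions tie-breaking as the only new work, invoking Theorem~\ref{thm:properties-sks-bucklin} and Lemma~\ref{lem:winner-sks-ties-restricted-no-sum-pos} exactly as you do. The decomposition, however, differs. The paper first runs the existing Bucklin-\textsc{DCAV} algorithms \emph{as black boxes} in both the nonunique- and the unique-winner model: a Yes for the former immediately yields a Yes for SkS (since $c$ not a Bucklin winner $\Rightarrow$ $c$ not an SkS winner), and a No for the latter immediately yields a No for SkS (since $c$ unique Bucklin winner $\Rightarrow$ $c$ unique SkS winner). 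Only the residual case---Bucklin nonunique says No but Bucklin unique says Yes, i.e., the best one can do is force a Bucklin co-winner alongside $c$---requires the new sum-of-positions analysis. You instead fold everything into a single loop over target configurations $(d,i^*,t,\ldots)$ and test each one directly. Your route is more explicit about what the inner feasibility test must check (score equality, the $\sumposi{i^*}$ inequality, and the later-stage score comparisons), while the paper's buys a cleaner reduction by disposing of the bulk of instances via already-proven subroutines. Both the paper's sketch and yours leave the same step unproven in detail---showing that the coupled score/sum-of-positions feasibility question is tractable once the configuration is fixed---and both flag it as the nontrivial core; at the level of detail the paper gives, your proposal is on equal footing.
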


Due to limited space, we only provide a sketch of the adaptions and the resulting algorithm.
Recall that by Theorem~\ref{thm:properties-sks-bucklin}, a unique Bucklin winner is also a unique SkS winner, and every SkS winner is also a Bucklin winner.
Additionally, Bucklin-\textsc{DCAV} is in $\p$ for both winner models, as has been shown by Erd\'{e}lyi et al.~\cite{erd-fel-rot-sch:j:control-in-bucklin-and-fallback-voting-theoretical}.
We can use these facts
by first running the Bucklin-\textsc{DCAV} algorithms before handling the cases specific to SkS in which the sum of positions comes into play.
Essentially, if
destructive control by adding voters is possible for Bucklin in the nonunique-winner model (i.e, if the despised candidate is not among the Bucklin winners after this control action), it immediately follows that destructive control by adding voters is possible for SkS
in the unique-winner model as well.
Moreover, if 
destructive control by adding voters for Bucklin is impossible in both winner models, it immediately follows that this control action is also impossible for SkS
in the unique-winner model.

We now consider the only remaining case where
destructive control by adding voters for Bucklin is impossible in the nonunique-winner model but possible in the unique-winner model.
That is, no candidate strictly beats the despised candidate, but it is possible to find a list of votes to add such that at least one candidate wins together with the despised candidate as a Bucklin co-winner.
For SkS, this is a necessary, yet not a sufficient condition for
destructive control by adding voters to be possible in the unique-winner model:
If two candidates tie in Bucklin score during some stage, since the despised candidate can still have a lower sum of positions, we cannot guarantee that the despised candidate is not a unique SkS winner.
Fortunately, it is possible---even though nontrivial---to check in polynomial time whether the current Bucklin co-winner can actually beat the despised candidate in SkS.

\section{Conclusion and Future Work}

\begin{table}
	\caption{Overview of complexity results for manipulation and control in SkS. R means resistance and V means vulnerability.}
	\label{tab:sks-complex-m-c-overview}
	\begin{tabularx}{\linewidth}{C|CCC}
		\toprule
		 & CWM & CDC & CAV \\
		\midrule
		Constructive (C) & R [Theorem~\ref{thm:sks-ccwm-np-complete}] &  R [Theorem~\ref{thm:sks-ccdc-dcdc-ccav}] &  R [Theorem~\ref{thm:sks-ccdc-dcdc-ccav}] \\
		Destructive (D) & V [Theorem~\ref{thm:sks-dcwm-in-p}] & R [Theorem~\ref{thm:sks-ccdc-dcdc-ccav}] &  V [Theorem~\ref{thm:sks-dcav}] \\
		\bottomrule
	\end{tabularx}
\end{table}

We have introduced the \emph{Skating System Single}, a voting system that is inspired by the Skating System which is used in ballroom tournaments, i.e., dance competitions.
We have studied SkS from an axiomatic point of view and have analyzed the computational complexity of manipulation for it; in addition, we have started the complexity-theoretic analysis of electoral control for SkS (see Tables~\ref{tab:properties} and \ref{tab:sks-complex-m-c-overview} for an overview of our results).
Since SkS has many similarities with Bucklin voting, it is not surprising that these two voting systems share many of their properties.
However, this is not always the case: For the property of positive responsiveness, we show that SkS satisfies it, whereas Bucklin does not.
While SkS certainly stands on its own, it can also be viewed as a refinement of Bucklin voting (as demonstrated by Theorem~\ref{thm:properties-sks-bucklin} and Lemma~\ref{lem:winner-sks-ties-restricted-no-sum-pos}) and can thus be applied in many general election settings aside from ballroom tournament finals.
Further, the proofs for SkS are technically more complicated due to the tie-breaking rules involving not only scores but also the candidates' sums of positions in the votes.

As to future research, it would be interesting to study further axiomatic properties of SkS (e.g., the \emph{Smith criterion}~\cite{smi:j:aggregation-of-preferences-with-variable-electorate}\footnote{Note that the \emph{strict Smith set} is also known as the \emph{Schwartz set}~\cite{sch:j:rational-policy-evaluation,sch:j:rationality-and-the-myth-of-the-maximum}.})
and to expand our study of attacking SkS to other manipulation and control scenarios as well as to bribery, which have been studied intensively for Bucklin voting~\cite{erd-fel-rot-sch:j:control-in-bucklin-and-fallback-voting-theoretical,erd-fel-rot-sch:j:control-in-bucklin-and-fallback-voting-experimental,fal-rei-rot-sch:j:manipulation-bribery-campaign-management-in-bucklin-fallback-voting}.
In particular, it would be highly interesting to detect further differences between the results on SkS and those on Bucklin voting.

\bigskip
\noindent
\textbf{Acknowledgements.}
This work was supported in part by Deutsche Forschungsgemeinschaft under DFG research grant RO\nobreakdash-1202/21\nobreakdash-2 (project 438204498).

\bibliographystyle{eptcs}
\bibliography{skating}



\end{document}